\newtheorem{theorem}{Theorem}
\newtheorem{definition}[theorem]{Definition}
\newtheorem{lemma}[theorem]{Lemma}
\newtheorem{proposition}[theorem]{Proposition}
\newtheorem{corollary}[theorem]{Corollary}
\newtheorem{example}[theorem]{Example}
\newcommand{\keyword}[1]{\mathsf{#1}}
\newcommand{\End}{\keyword{end}}
\newcommand{\inp}[2]{?{#1}.#2}
\newcommand{\outp}[2]{!{#1}.#2}
\newcommand{\Int}{\keyword{int}}
\newcommand{\rec}[2]{\mu{#1}.{#2}}
\newcommand{\SType}{\operator{SType}}
\newcommand{\Type}{\operator{Type}}
\newcommand{\D}{\mathcal{D}}
\newcommand{\E}{\mathcal{E}}
\newcommand\ST{\mathcal{S}}
\newcommand{\operator}[1]{\mathsf{#1}}
\newcommand{\fv}{\operator{fv}}
\newcommand{\dom}{\operator{dom}}
\newcommand{\dual}[1]{\overline{#1}} % The function
\newcommand{\dualrel}{\asymp} % The relation
\newcommand{\cdual}[1]{\mathsf{dual}_{\mathsf{LMN}}(#1)}
\newcommand{\coidual}[1]{\operator{dual}(#1)}
\newcommand\embed[1]{\lfloor#1\rfloor}
\newcommand{\lmdual}{\mathsf{dual}_{\mathsf{LM}}}
\newcommand{\lmdualp}{\mathsf{dual}_{\mathsf{LMP}}}
\newcommand{\dualofop}{\operator{\mathsf{dual}_{BH}}}
\newcommand{\mcduals}[2][\sigma]{\dualofop({#2},{#1})}
\newcommand{\mcdual}[1]{\dualofop({#1})}
\newcommand{\subst}[2]{[#1/#2]} % square brackets for a change (most
\newcommand{\lmsubst}[2]{\{#1/#2\}} 
\newcommand{\sequiv}{\approx}
\newcommand{\Fequiv}[1]{F_\sequiv(#1)}
\newcommand{\Fdual}[1]{F_\perp(#1)}
\newcommand{\mclo}[2]{\operator{mclo}(#1,#2)}
\newcommand{\mcl}[1]{\operator{mclo}(#1)}
\newcommand{\treeof}[1]{\operator{treeof}(#1)}
\newcommand{\varseq}{\mathcal X}
\newcommand\Sgen{{S}_{\operator{gen}}}
\newcommand{\istailrec}[2][\varseq]{{#1}\vdash{#2}}
\newcommand{\Empty}{\varepsilon}
\newcommand{\grmeq}{\; ::= \;}
\newcommand{\grmor}{\; \mid \;}
\title{Duality of Session Types: The Final Cut}
\author{Simon J.\ Gay
\institute{School of Computing Science}
\institute{University of Glasgow, UK}
\email{Simon.Gay@}
\email{glasgow.ac.uk}
\and
Peter Thiemann
\institute{Institut f\"{u}r Informatik}
\institute{University of Freiburg, Germany}
\email{thiemann@}
\email{informatik.uni-freiburg.de}
\and
Vasco T.\ Vasconcelos
\institute{Faculdade de Ci\^{e}ncias}
\institute{University of Lisbon, Portugal}
\email{vmvasconcelos@}
\email{ciencias.ulisboa.pt}
}
\begin{document}
\maketitle

\begin{abstract}
  Duality is a central concept in the theory of session types.  Since
  a flaw was found in the original definition of duality for recursive types,
  several other definitions have been published. As their connection
  is not obvious, we compare the competing definitions, discuss
  tradeoffs, and prove some equivalences. Some of the results are
  mechanized in Agda.
\end{abstract}

% !TeX root = main.tex

\section{Introduction}
\label{sec:intro}
Duality is a central concept in the theory of session types. If $S$ is a session type describing a two-party interaction from the viewpoint of one party, then $\dual{S}$ describes the interaction from the viewpoint of the other party. For example, $S = \rec{X}{\inp{\Int}{X}}$ describes indefinitely receiving integers, and its dual $\dual{S} = \rec{X}{\outp{\Int}{X}}$ describes indefinitely sending integers. If the users of the two endpoints of a channel follow types $S$ and $\dual{S}$, respectively, then correct communication takes place.

The original papers on session types
\cite{DBLP:conf/concur/Honda93,HondaVK98,TakeuchiHK94} define the dual
$\dual{S}$ of a session type $S$ by structural induction on $S$:
\begin{align*}
  \dual{\End} = \End
  &&
  \dual{\outp{T}{S}} = \inp{T}{\dual{S}}
  &&
  \dual{\inp{T}{S}} =\outp{T}{\dual{S}}
\end{align*}

Recursion is only introduced in the last paper in the series,
\cite{HondaVK98}, where recursive session types are handled via the
following rules.
\begin{align*}
  \dual{X} = X && \dual{\rec{X}{S}} = \rec{X}{\dual{S}}
\end{align*}

With this definition, indeed we have
$\dual{\rec{X}{\inp{\Int}{X}}} = \rec{X}{\outp{\Int}{X}}$, given that
duality exchanges input and output. Gay \& Hole \cite{GayH99,GayH05}
define a more general duality \emph{relation} $\perp$ so that, for
example,
$\rec{X}{\inp{\Int}{X}} \perp
\rec{X}{\outp{\Int}{\outp{\Int}{X}}}$. The definition is
coinductive. This relation gives greater flexibility in typing
derivations and follows the idea that duality is a behavioural
relation on automata. The relationship between the duality function
and the duality relation is intended to be that $\dual{S} \perp S$ for
every session type $S$.

Bernardi \& Hennessy \cite{BernardiH14,BernardiH16} show that the
duality function $\dual{(\,\cdot\,)}$ violates the duality relation
%is not correct
%\vv{In which sense?} \sg{In the sense explained in the rest of this paragraph!} \pt{one can say ``it violates the duality relation''} 
for recursive session types when the recursion variable can occur as the type of a message, as in $S = \rec{X}{\inp{X}{X}}$. In this example, we have $\dual{S} = \rec{X}{\outp{X}{X}}$. Noting that an occurrence of $X$ stands for the whole $\mu$ type, the type of the message in $S$ is $S$ but the type of the message in $\dual{S}$ is $\dual{S}$. In other words, we have dual types in which the type of the message being sent is not the same as the type of the message being received, which violates soundness of any type system that uses this definition of duality. We refer to $\dual{(\,\cdot\,)}$ as \emph{naive duality} because it initially seems reasonable but is not correct in all situations.

One way to solve this problem is to require that recursion variables
only occur in tail position in a session type, such as $X$ in
$\inp{T}{X}$. As far as we know, almost all papers that use naive
duality can be fixed by restricting to tail recursion, because their
examples and applications are all tail recursive. One exception is a
paper by Vasconcelos~\cite{Vasconcelos12}, which has an interesting
application of the type $\rec{X}{\outp{X}{X}}$ to encode replication,
but that could be easily solved with a tail recursive type at the
expense of creating an extra channel. Bernardi \& Hennessy give
examples of pi-calculus processes that can only be typed by using
non-tail-recursive session types, but they are specially crafted for
the purpose. Nevertheless, it is more satisfactory to have a duality
function that works for all session types.

Bernardi \& Hennessy \cite{BernardiH16} give an alternative, correct duality function and justify it with respect to their model of session types which is based on a theory of contracts. The key idea is that a session type can be converted into an equivalent type in which all message types are closed. In Section~\ref{sec:duality-bh} we present their definition and a variation of it, and reformulate their correctness result in a standard model of recursive types.
%Their definition proceeds in several stages. First, they define a session type to be \emph{m-closed} if its message types are closed, and observe that naive duality is correct for m-closed types. Second, for every session type $S$ they define its \emph{m-closure}, which is m-closed and equivalent to $S$. Finally, they define the dual of an arbitrary session type to be the naive dual of its m-closure.

%Consider again the type $S = \rec{X}{\inp{X}{X}}$. It is not m-closed because the message type is $X$. Its m-closure is obtained by replacing the message type with $S$, which is what $X$ stands for within the recursive type. The result is $\rec{X}{\inp{S}{X}} = \rec{X}{\inp{(\rec{X}{\inp{X}{X}})}{X}}$. Finally, taking the naive dual gives $\rec{X}{\outp{(\rec{X}{\inp{X}{X}})}{X}}$.

Bernardi, Dardha, Gay \& Kouzapas \cite{BernardiDGK14} discuss several definitions of duality, focusing on the fact that there can be different sound definitions which give rise to different typing relations. One of their definitions is that of Bernardi \& Hennessy \cite{BernardiH16}. They point out that some results claimed by Gay \& Hole \cite{GayH05} are false for non-tail-recursive types.  

Lindley \& Morris \cite{LindleyM16} give another definition of the duality function. It maps a type variable $X$ in tail position to a \emph{negative type variable} $\dual{X}$, but in a message position it remains as $X$. As well as being a technical convenience, negative variables allow interesting types such as $\rec{X}{\outp{\dual{X}}{X}}$. Lindley \& Morris justify their definition on general type-theoretic grounds, but do not directly prove its correctness with respect to the duality relation. In Section~\ref{sec:duality-lm} we do so, as well as giving an equivalent and arguably simpler variation, and another variation that turns out to be equivalent to the Bernardi-Hennessy definition.

%In Section~\ref{sec:tail-recursive} we give a precise definition of tail recursion, which is slightly more complicated than expected. We then prove that naive duality is correct for tail recursive types, and present some results of Bernardi \& Hennessy in our setting.

As well as proving the results mentioned above on paper, we have begun work on mechanising them in Agda. We summarise the mechanisation in Section~\ref{sec:mechanized-results}.

\section{Basic Definitions about Session Types}
\label{sec:definitions}

We work formally with a subset of session types, consisting of input
and output (no branch or select), and $\Int$ as a representative data
type. All definitions and proofs can be straightforwardly extended to
cover branch and select; reasoning about equivalence and duality of
session types is not affected by the details of data types.

\begin{definition}[Types and Session Types]
  \label{def:types}
  Let $X,Y,Z$ range over a denumerable set of \emph{type variables}.
  Types ($T, U$) and session types ($R, S$) are defined by
  \begin{align*}
    T, U & \grmeq  \Int \grmor S &
                                   R, S & \grmeq \End \grmor \inp{T}{S} \grmor \outp{T}{S} \grmor X \grmor \rec{X}{S}
  \end{align*}
  Session types must be \emph{contractive}, meaning that they must not
  contain sub-expressions of the form
  $\rec{X}{\rec{X_1}{\ldots\rec{X_n}{X}}}$ for $n\ge0$.
The expression $\rec XS$ binds type variable~$X$ with scope $S$.
The set $\fv(T)$ of \emph{free type variables} in a type~$T$ is
defined as usual, and so is \emph{$\alpha$-congruence}. 
%
%The set of session types, possibly with free variables, is denoted by
%$\STypefv$. 
The set of closed session types is denoted by $\SType$ and the set of
closed types is denoted by $\Type$, so that
$\Type = \SType \cup \{ \Int \}$.  We identify types that are
$\alpha$-congruent and follow the Barendregt convention on
variables~\cite{DBLP:books/daglib/0067558}.
\end{definition}

In the sequel, we use term \emph{type} for any contractive type
generated by the grammar for $T$. When we mean a closed type, we shall
speak of $T\in\Type$. The same reasoning applies to session types,
where the term \emph{session type} denotes a contractive type
generated by the grammar for $S$, and $S\in\SType$ denotes a closed
session type.

\begin{definition}[Substitution]
  \label{def:substitution}
  The result of substituting type~$U$ for the free occurrences of
  variable~$X$ in type~$T$---notation $T\subst{U}{X}$---is defined
  inductively as follows, where $Y\ne X$.
  \begin{align*}
    \Int\subst{U}{X} & =\, \Int &
    (\inp{S}{T})\subst{U}{X} & =\, \inp{S\subst{U}{X}}{T\subst{U}{X}} &
    X\subst{U}{X} & =\, U&
    (\rec{X}{S})\subst{U}{X} & =\, \rec{X}{S}
    \\
    \End\subst{U}{X} &=\, \End &
    (\outp{S}{T})\subst{U}{X} & =\, \outp{S\subst{U}{X}}{T\subst{U}{X}} &
    Y\subst{U}{X} & =\, Y &
    (\rec{Y}{S})\subst{U}{X} & =\, \rec{Y}{S\subst{U}{X}}
  \end{align*}
\end{definition}

Closed session types are interpreted as regular trees in the standard
way presented by Pierce \cite[Chapter 21]{Pierce}.  A regular tree is
a (possibly infinite) tree with a finite number of distinct subtrees.

\begin{definition}[Types as Trees]
  \label{def:treeof}
  Types are represented by regular trees whose nodes are taken from
  the set $\{\Int,\End,!,?\}$, $\Int$ and $\End$ have no descendants,
  $!$ and $?$ have two descendants, and $\Int$ can only occur as root
  or at the immediate left of $!$ or $?$.
  We write $\treeof T$ for the tree representation of~$T$.
  % The set of trees types---notation $\SType^\infty$---is the greatest
  % fixpoint of function $\Sgen$ defined as
  % \begin{equation*}
  %   \Sgen(\ST) =
  %   \{ \End \} \cup \{ \outp{T}S, \inp{T}S \mid S \in \ST, T \in \{\Int\} \cup \ST \}
  % \end{equation*}
\end{definition}

% \begin{definition}[Regular Tree Associated to a Session Type]
%   \label{def:treeof}
%   The regular tree associated to a session type~$S$---notation
%   $\treeof{S}$---is an element of $\SType^\infty$, defined as standard
%   \cite[Chapter 21]{Pierce}.
% \end{definition}

\begin{example}
  Let $S$ be the session type $\rec X{\outp XX}$. The regular tree
  $t=\treeof S$ can be depicted as below left. The tree $u$ such that $t
  \dualrel u$ can be depicted as below right. 

  \hfill
  \Tree[.$!$ [.$!$ [.$!$ t t ] [.$!$ t t ] ] [.$!$ [.$!$ t t ] [.$!$ t t ] ]]
  \hfill
  \Tree[.$?$ [.$!$ [.$!$ t t ] [.$!$ t t ] ] [.$?$ [.$!$ t t ] [.$?$ t u ] ]]
  \hfill
  \hspace{0ex}
\end{example}

Equivalence of session types is equality of trees. We give a
coinductive syntactic characterisation of equivalence.

\begin{definition}[Syntactic Equivalence of Types]
  If $\E$ is a relation on $\Type$ then $\Fequiv{\E}$ is the relation on $\Type$ defined by:
  \begin{align*}
    \Fequiv{\E} & = \{ (\End,\End) \} \\
    & \cup\; \{ (\Int,\Int) \} \\
    & \cup\; \{ (\inp{T_1}{S_1},\inp{T_2}{S_2}) ~|~ (T_1,T_2), (S_1,S_2) \in\E \} \\
    & \cup\; \{ (\outp{T_1}{S_1},\outp{T_2}{S_2}) ~|~ (T_1,T_2), (S_1,S_2) \in\E \} \\
    & \cup\; \{ (S_1,\rec{X}{S_2}) ~|~ (S_1,S_2\subst{\rec{X}{S_2}}{X}) \in\E \} \\
    & \cup\; \{ (\rec{X}{S_1},S_2) ~|~ (S_1\subst{\rec{X}{S_1}}{X},S_2) \in\E ~\text{and}~ S_2\not=\rec{Y}{S_3} \}
  \end{align*}
  A relation $\E$ on $\Type$ is a \emph{type bisimulation} if
  $\E \subseteq \Fequiv{\E}$. Syntactic equivalence of types, $\sequiv$,
  is the largest type bisimulation.
\end{definition}

\begin{proposition}[Type equivalence is tree equality~\cite{Pierce}]
  \label{prop:type-equi-tree-equality}
  Let $T, U \in\Type$.  Then $T\sequiv U$ if and only if $\treeof{T} =
  \treeof{U}$.
\end{proposition}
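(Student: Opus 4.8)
The plan is to prove the two implications separately, in both cases reducing tree equality to the structure of the generating function $\Fequiv{\cdot}$ and relying on contractivity to tame the recursion clauses. Throughout I use the \emph{unfolding identity} $\treeof{\rec{X}{S}} = \treeof{S\subst{\rec{X}{S}}{X}}$, which is immediate from the standard definition of $\treeof{\cdot}$, together with the facts that $\treeof{\inp{T}{S}}$ (resp.\ $\treeof{\outp{T}{S}}$) is the tree with root $?$ (resp.\ $!$) and immediate subtrees $\treeof{T}$ and $\treeof{S}$, while $\treeof{\End}$ and $\treeof{\Int}$ are leaves.

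For the ``if'' direction I would show that the relation $\mathcal{R} = \{(T,U) \mid \treeof{T} = \treeof{U}\}$ is a type bisimulation; since $\sequiv$ is the largest type bisimulation, this yields $\mathcal{R} \subseteq {\sequiv}$, i.e.\ $\treeof{T} = \treeof{U}$ implies $T \sequiv U$. To check $\mathcal{R} \subseteq \Fequiv{\mathcal{R}}$, take $(T,U) \in \mathcal{R}$ and proceed by cases on whether $T$ and $U$ are $\mu$-types. If $U = \rec{X}{S_2}$, the unfolding identity gives $\treeof{T} = \treeof{S_2\subst{\rec{X}{S_2}}{X}}$, so $(T, S_2\subst{\rec{X}{S_2}}{X}) \in \mathcal{R}$ and the fifth clause applies. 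If $U$ is not a $\mu$-type but $T = \rec{X}{S_1}$, the sixth clause applies by the symmetric argument, its side condition $U \ne \rec{Y}{S_3}$ being exactly the present case. If neither is a $\mu$-type, then both are head normal, and equality of the roots of $\treeof{T}$ and $\treeof{U}$ forces a common top constructor; matching the immediate subtrees then supplies the premises of the corresponding $\End$, $\Int$, input, or output clause.

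For the ``only if'' direction I would first establish a \emph{head lemma}: if $T \sequiv U$ then, after unfolding their top-level $\mu$-types to head normal forms, $T$ and $U$ expose the same top constructor, and in the input/output cases the corresponding message and continuation components are again related by $\sequiv$. This is proved by induction on the number of top-level unfoldings needed to reach head normal form---finite by contractivity---using $\sequiv \subseteq \Fequiv{\sequiv}$: when $U$ is a $\mu$-type the only clause that can justify $(T,U) \in \Fequiv{\sequiv}$ is the fifth, which unfolds $U$ and strictly decreases the measure; symmetrically for $T$ via the sixth clause; and when neither is a $\mu$-type the applicable clause directly supplies both the shared constructor and the $\sequiv$-related components. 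Given the head lemma, I conclude $\treeof{T} = \treeof{U}$ by showing that the two trees carry the same label at every path $\pi \in \{1,2\}^{*}$, by induction on $|\pi|$: the root case is the head lemma, and the inductive step follows because the $i$-th subtrees of $\treeof{T}$ and $\treeof{U}$ are $\treeof{T_i}$ and $\treeof{U_i}$ with $T_i \sequiv U_i$, again via the unfolding identity.

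The main obstacle is the interaction between the recursion clauses and well-foundedness. Neither $\mu$-clause decreases the syntactic size of the pair, so the inductions cannot be structural; the crucial ingredient is contractivity, which guarantees that repeated top-level unfolding reaches a genuine constructor in finitely many steps and thereby furnishes the terminating measure used in the head lemma (and also underwrites the well-definedness of $\treeof{\cdot}$). A secondary point to treat with care is the asymmetry of the last two clauses: the side condition on the sixth clause must be invoked to ensure that, when $U$ is a $\mu$-type, the fifth clause is the unique possible justification, which is precisely what makes the case analysis exhaustive and deterministic.
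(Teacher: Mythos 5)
Your proof is correct and matches the paper's approach: the paper gives no proof of its own, deferring to Pierce~\cite{Pierce}, and your argument---one direction by showing that the tree-equality relation $\{(T,U) \mid \treeof{T}=\treeof{U}\}$ is a type bisimulation and hence contained in the largest one, the other via a contractivity-justified head-normalisation lemma propagated along all paths of the trees---is precisely the standard argument found there. Your closing remarks on the asymmetry of the two $\mu$-clauses and on contractivity supplying the terminating measure correctly identify the only delicate points, and the same scheme is what the authors implicitly invoke when they call the proof of Proposition~\ref{prop:session-correctness} ``similar.''
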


% Duality for coinductive session types is a straightforward
% corecursively defined function which we call $\coidual\cdot$.

The duality relation is defined on regular trees.

\begin{definition}[Duality on Trees]
  \label{def:tree-duality}
  Two trees, $s$ and $t$, are related by duality---notation
  $s\dualrel t$---if they have the same structure and, for each pair of
  corresponding nodes, if the nodes are \emph{in the right spine of
    the tree} they are related as below, otherwise they are the same.
  \begin{align*}
    \Int\leftrightarrow\Int && \End\leftrightarrow\End && ? \leftrightarrow{} ! && ! \leftrightarrow{} ?
  \end{align*}
  % The duality function on trees is defined as follows.
  % \begin{align*}
  %   \coidual \End & = \End &
  %   \coidual {\outp T S} &= \inp T {\coidual S} &
  %   \coidual {\inp T S} &= \outp T {\coidual S} 
  % \end{align*}
  % where we use the type constructors as constructors for trees.
Because $\dualrel$ is bijective and every tree is related to some other (unique) tree, we can also regard it as a self-inverse function, which we denote by $\coidual{\cdot}$.
\end{definition}

% If $t$ is a regular tree representing a session type, then
% $\treedual{t}$ is the tree with the same structure as $t$ and with the
% mapping $\End\mapsto\End, ?\mapsto!, !\mapsto?$ applied to
% constructors \emph{in the right spine of the tree}.

Proceeding as for type equivalence, we now give a coinductive
syntactic characterisation of the duality relation, restricting attention to session types because $\Int$ can only occur in message positions, where duality is never applied. This principle is applied to all of our syntactic definitions of duality.

\begin{definition}[Syntactic Duality of Session Types]
  \label{def:duality-relation-mu}
  If $\D$ is a relation on $\SType$ then $\Fdual{\D}$ is the relation on $\SType$ defined by:
  \begin{align*}
    \Fdual{\D} & = \{ (\End,\End) \} \\
    & \cup \{ (\inp{T_1}{S_1},\outp{T_2}{S_2}) ~|~ T_1 \sequiv T_2 ~\text{and}~ (S_1,S_2) \in\D \} \\
    & \cup \{ (\outp{T_1}{S_1},\inp{T_2}{S_2}) ~|~ T_1 \sequiv T_2 ~\text{and}~  (S_1,S_2) \in\D \} \\
    & \cup \{ (S_1,\rec{X}{S_2}) ~|~ (S_1,S_2\subst{\rec{X}{S_2}}{X}) \in\D \} \\
    & \cup \{ (\rec{X}{S_1},S_2) ~|~ (S_1\subst{\rec{X}{S_1}}{X},S_2) \in\D ~\text{and}~ S_2\not=\rec{Y}{S_3} \}
  \end{align*}
  A relation $\D$ on $\SType$ is a \emph{session duality} if $\D
  \subseteq \Fdual{\D}$. Duality of session types, $\perp$, is the
  largest session duality. 
\end{definition}

\begin{proposition}[Type Duality Is Tree Duality]
  \label{prop:session-correctness}
  Let $R, S \in\SType$.  Then $R\perp S$ if and only if
  $\treeof R \dualrel \treeof S$.
\end{proposition}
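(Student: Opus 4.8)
The plan is to prove the two implications separately, in each case by exhibiting a suitable bisimulation, mirroring the proof of Proposition~\ref{prop:type-equi-tree-equality}. The key simplification is to use the self-inverse dual function $\coidual\cdot$ of Definition~\ref{def:tree-duality}: since $\dualrel$ is a function, $\treeof R \dualrel \treeof S$ holds if and only if $\treeof S = \coidual{\treeof R}$. The function $\coidual\cdot$ acts by dualising the nodes along the right spine while leaving the (message) left subtrees untouched; concretely $\coidual\End = \End$, $\coidual{!(a,b)} = ?(a,\coidual b)$ and $\coidual{?(a,b)} = !(a,\coidual b)$, where $a$ is the left child and $b$ the right child. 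Turning duality of trees into equality of trees lets me reuse Proposition~\ref{prop:type-equi-tree-equality} to discharge the constraints on message types, which are compared by $\sequiv$ in $\Fdual\cdot$ and are exactly the off-spine subtrees that $\coidual\cdot$ keeps fixed.

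For the direction $\treeof R \dualrel \treeof S \Rightarrow R \perp S$, I would set $\D = \{(R,S) \in \SType\times\SType \mid \treeof S = \coidual{\treeof R}\}$ and show $\D \subseteq \Fdual\D$, whence $\D\subseteq\perp$ because $\perp$ is the largest session duality. Given $(R,S)\in\D$, I case on the head of $R$, unfolding any leading $\rec{X}{\cdot}$ first on the right operand and then on the left (to match the two $\mu$-clauses of $\Fdual\cdot$), using that $\treeof{\rec{X}{S'}} = \treeof{S'\subst{\rec{X}{S'}}{X}}$ so that unfolding preserves membership in $\D$. Once both operands are guarded, $\treeof S = \coidual{\treeof R}$ forces the head of $S$ to be the dual constructor of the head of $R$: if $R = \inp{T}{R'}$ then $S = \outp{T'}{S'}$ with $\treeof{T} = \treeof{T'}$ and $\treeof{S'} = \coidual{\treeof{R'}}$; by Proposition~\ref{prop:type-equi-tree-equality} the former gives $T\sequiv T'$ and the latter gives $(R',S')\in\D$, which is exactly the premise of the relevant clause. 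The output and $\End$ cases are analogous.

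For the converse $R \perp S \Rightarrow \treeof R \dualrel \treeof S$, I would show that the tree relation $\mathcal R = \{(\coidual{\treeof R}, \treeof S) \mid R\perp S\} \cup \{(\treeof T, \treeof T)\mid T\in\Type\}$ is a bisimulation for tree equality, so that $\coidual{\treeof R} = \treeof S$ for every $R\perp S$. Given $R\perp S$, I unfold $R\perp S$ through $\Fdual\perp$: after stripping leading recursions via the same $\treeof{\rec{X}{\cdot}}$ identity, the input/output clauses produce a pair whose two trees carry the same head constructor ($\coidual\cdot$ having flipped the head of $\treeof R$ to agree with that of $\treeof S$), whose left children are equal by Proposition~\ref{prop:type-equi-tree-equality} applied to the side condition $T\sequiv T'$, and whose right children are again in $\mathcal R$. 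The diagonal pairs handle the message subtrees.

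The main obstacle is not the case analysis, which is routine, but justifying that the coinductive arguments are productive in the presence of recursion. Contractivity guarantees that repeatedly applying the two $\mu$-clauses terminates, so every closed session type exposes a head constructor from $\{\End,?,!\}$ after finitely many unfoldings; this is what makes $\treeof\cdot$, $\coidual\cdot$ and the membership tests in $\D$ and $\mathcal R$ well defined and keeps the bisimulation proofs from stalling on an infinite regress of unfoldings. Care is also needed with the asymmetry of the two $\mu$-clauses (the side condition $S_2\neq\rec{Y}{S_3}$): the unfoldings must always be performed right-operand-first, matching the order imposed by $\Fdual\cdot$, so that each step of the bisimulation lands in the intended clause.
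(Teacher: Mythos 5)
Your proposal is correct and matches the paper's intent: the paper's proof is the one-liner ``Similar to that of Proposition~\ref{prop:type-equi-tree-equality}'', i.e.\ the standard two-directional bisimulation argument over the coinductive definitions, which is exactly what you carry out. Your elaborations---recasting $\treeof R \dualrel \treeof S$ as the tree equality $\treeof S = \coidual{\treeof R}$ (legitimate, since the paper itself observes $\dualrel$ is a self-inverse function), invoking Proposition~\ref{prop:type-equi-tree-equality} for the $\sequiv$ side conditions on message types, and respecting the right-operand-first unfolding forced by the side condition $S_2\neq\rec{Y}{S_3}$---are faithful fillings-in of the details the paper leaves implicit, not a different route.
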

\begin{proof}
  Similar to that of Proposition~\ref{prop:type-equi-tree-equality}.
\end{proof}

This section introduces duality as a relation on session types. It
turns out that, given a session type~$S$, one can construct a session
type $S'$ such that $\treeof{S}\dualrel\treeof{S'}$, or equivalently $S\perp S'$. The next two sections show two
different approaches to the problem, both starting from session types
in syntactic form (Definition~\ref{def:types}).

%%% Local Variables:
%%% mode: latex
%%% TeX-master: "main"
%%% End:

% !TeX root = main.tex

\section{Duality \`{a} la Bernardi-Hennessy}
\label{sec:duality-bh}

Bernardi and Hennessy~\cite{BernardiH16} observe that the problem of
building a dual session type with naive duality (as explained in Section~\ref{sec:intro}) is caused by free variables in message types. They give a
method for constructing a dual type for an arbitrary session type $S$:
\begin{enumerate}
\item Convert $S$ into an equivalent type $S'$ in which every message
  type is closed. This step is called \emph{message closure} (Definition~\ref{def:mclos}, later).
\item Apply naive duality to $S'$ (Definition~\ref{def:naiveduality}, below).
\end{enumerate}
In this section we present the details of this approach. First we gather the definition of naive duality from Section~\ref{sec:intro}.

\begin{definition}[Naive Duality Function]
  \label{def:naiveduality}
  The naive duality function on session types is inductively defined
  as follows.
  \begin{align*}
    \dual{\inp{T}{S}} & =\, \outp{T}{\dual{S}} &
    \dual{\outp{T}{S}} & =\, \inp{T}{\dual{S}} & 
    \dual{\End} & =\, \End & \dual{X} & =\, X &
    \dual{\rec{X}{S}} & =\, \rec{X}{\dual{S}}
  \end{align*}
\end{definition}

We use the term \emph{tail recursive} for session types in which all
message types are closed. It turns out that these are \emph{not} types
with variables in tail positions only. A counterexample is
$\rec X{\outp{(\inp\Int X)}\End}$ where $X$ occurs in tail position,
but there is a message type that is not closed, namely
$\inp\Int X$.  To define tail recursive types we introduce a type
formation system that essentially keeps track of the free variables in
processes, in such a way that types such as the above are deemed
ill-formed.

\begin{definition}[Tail Recursive Types]
  \label{def:tail-rec-types}
  Let $\varseq$ be a set of type variables.
% defined by the
%  following grammar.
%  % 
%  \begin{equation*}
%    \varseq \grmeq \Empty \grmor \varseq, X
%  \end{equation*}
  The set of \emph{tail recursive types over $\varseq$}, notation
  $\istailrec T$, is defined inductively as follows.
  \begin{gather*}
    \frac{}{\istailrec\Int}
    \qquad
    \frac{}{\istailrec\End}
    \qquad
    \frac{\istailrec[\emptyset]S \quad \istailrec T}{\istailrec{\inp ST}}
    \qquad
    \frac{\istailrec[\emptyset]S \quad \istailrec T}{\istailrec{\outp ST}}
    \qquad
    \frac{X\in\varseq}{\istailrec[\varseq]X}
    \qquad
    \frac{% \iscontr[X.T]\quad
      \istailrec[\varseq,X]T}{\istailrec{\rec XT}}
  \end{gather*}
  %
  % Notation $\varseq,X$ denotes the set $\varseq\cup\{X\}$ when $X
  % \notin \varseq$, which is certainly the case by the variable convention.

  The set of \emph{tail recursive types} is the set of types $T$ such
  that $\istailrec[\emptyset]T$.
\end{definition}

We can easily see that, if $\istailrec{\rec X{\outp ST}}$, then~$X$
does not occur free in $S$. In particular the type
$\rec X{\outp{(\inp\Int X)}\End}$ identified above is not tail recursive.

Gay \& Hole~\cite{GayH05} claim to prove that for all $S\in\SType$, $\dual{S} \perp S$. They use a slightly different
definition of $\perp$ in which types are completely unfolded before
analysing their structure. Unfolding means repeatedly transforming top-level
$\rec XS$ to $S[\rec XS/X]$ until a non-$\mu$ type is exposed.
However, the proof contains the claim that
if the unfolding of $S$ is $\inp{T}{S'}$ then the unfolding of $\dual{S}$ is 
$\outp{T}{\dual{S'}}$, which is not true if $T$
contains type variables. Their proof does, however, show the
following result.

% \begin{proposition}
% \label{prop:tail-recursive-dual}
% If $S\in\SType$ is tail-recursive then $\dual{S}\perp S$.
% \end{proposition}

\begin{proposition}[Soundness of Naive Duality for Tail Recursive Types~\cite{GayH05}]
  If $S$ is a tail recursive type, then $S \perp \dual{S}$.
\end{proposition}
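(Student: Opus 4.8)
The plan is to prove the statement by coinduction, exhibiting a session duality containing the pair $(S,\dual S)$. The naive candidate $\{(S,\dual S)\mid S\text{ tail recursive}\}$ is not a session duality: when $S=\rec{X}{A}$ the only clause of $\Fdual{\cdot}$ that applies unfolds the recursion on the right, $\dual S=\rec{X}{\dual A}$, while leaving the left folded, and the resulting pair no longer has this shape. I would therefore strengthen it to
\[
  \D = \{\,(S_1,\dual{S_2}) \mid S_1,S_2 \text{ tail recursive and } S_1\sequiv S_2\,\},
\]
which is closed under unfolding on either side because unfolding preserves equivalence. Since $\sequiv$ is reflexive, $(S,\dual S)\in\D$, so it suffices to show $\D\subseteq\Fdual\D$; as $\perp$ is the largest session duality, this gives $S\perp\dual S$.

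I would first record three auxiliary facts. (i) \emph{Tail recursion is stable}: from $\istailrec[\emptyset]{\inp{T}{S'}}$ we read off that $T$ is closed and $\istailrec[\emptyset]{S'}$, and unfolding preserves tail recursion. The latter follows from a substitution lemma proved by induction on the formation derivation---if $\istailrec[\varseq,X]{S}$ and $\istailrec[\emptyset]{R}$ then $\istailrec[\varseq]{S\subst{R}{X}}$---together with weakening; the input and output cases go through \emph{precisely because} the message type is required to be closed. (ii) \emph{Unfolding preserves equivalence}: $\unfold{S}\sequiv S$, since unfolding does not change the tree (Proposition~\ref{prop:type-equi-tree-equality}). (iii) \emph{Naive duality commutes with substitution}, $\dual{S\subst{R}{X}}=\dual S\subst{\dual R}{X}$, provided no message type inside $S$ has a free $X$; combined with $\dual{\rec{Y}{B}}=\rec{Y}{\dual B}$ this yields $\unfold{\dual{S_2}}=\dual{\unfold{S_2}}$ for tail recursive $S_2$.

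With these in hand I would verify $\D\subseteq\Fdual\D$ for a pair $(S_1,\dual{S_2})\in\D$ by distinguishing whether $S_2$ is a recursion. If $S_2=\rec{Y}{B}$, then $\dual{S_2}=\rec{Y}{\dual B}$ and the unfold-on-the-right clause reduces the goal to $(S_1,\dual B\subst{\rec{Y}{\dual B}}{Y})\in\D$; by (iii) the second component equals $\dual{\unfold{S_2}}$, and by (i) and (ii) we have $\unfold{S_2}$ tail recursive with $\unfold{S_2}\sequiv S_2\sequiv S_1$, so the pair lies in $\D$. If $S_2$ is not a recursion but $S_1=\rec{X}{A}$, then $\dual{S_2}$ is also not a recursion, so the unfold-on-the-left clause applies and reduces the goal to $(\unfold{S_1},\dual{S_2})\in\D$. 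Finally, if neither is a recursion, then $S_1\sequiv S_2$ puts $(S_1,S_2)$ into one of the structural clauses of $\Fequiv{\cdot}$, forcing a common head constructor: $\End$ gives the base pair; a matched pair of inputs $\inp{T_1}{S_1'}$ and $\inp{T_2}{S_2'}$ gives $T_1\sequiv T_2$ and $S_1'\sequiv S_2'$, so $(\inp{T_1}{S_1'},\outp{T_2}{\dual{S_2'}})$ sits in the input clause of $\Fdual\D$ with residual $(S_1',\dual{S_2'})\in\D$; outputs are dual.

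The main obstacle is fact (iii): naive duality does \emph{not} commute with substitution in general, which is exactly the Bernardi--Hennessy phenomenon, where a message variable $X$ unfolds to $S$ on one side and to $\dual S$ on the other. Only the tail-recursion hypothesis---closed message types, hence $X\notin\fv$ of any message---makes the two substitutions agree, and this is the single place where the hypothesis is essential. Designing $\D$ so that it is simultaneously closed under one-sided unfolding on the left and on the right, while remembering the equivalence $S_1\sequiv S_2$, is the other point needing care. An alternative route bypasses this bookkeeping by proving $\treeof{\dual S}=\coidual{\treeof S}$ directly as a tree bisimulation and invoking Proposition~\ref{prop:session-correctness}; there the same closedness hypothesis reappears as the requirement that the off-spine message subtrees of the two trees coincide.
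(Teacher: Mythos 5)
Your proof is correct, but note that the paper gives no self-contained argument for this proposition at all: it simply observes that Gay \& Hole's original proof (stated for their unfold-first formulation of $\perp$) is sound except for the single claim that if $S$ unfolds to $\inp{T}{S'}$ then $\dual{S}$ unfolds to $\outp{T}{\dual{S'}}$, which fails only when $T$ contains free type variables; since tail recursive types have closed message types, that claim holds and the citation carries the rest. Your fact (iii)---commutation of $\dual{(\,\cdot\,)}$ with substitution, and hence with unfolding, under the closed-message hypothesis---is exactly that repaired claim, so you have correctly isolated the crux, and your final paragraph even articulates why it is the only place the hypothesis is essential. What your route adds is a complete coinduction adapted to this paper's own one-step-fold formulation of $\Fdual{\cdot}$ (Definition~\ref{def:duality-relation-mu}) rather than Gay--Hole's full-unfolding variant, and the strengthening of the candidate relation to pairs $(S_1,\dual{S_2})$ with $S_1 \sequiv S_2$ is the right move: it is forced precisely because the $\mu$-clauses of $\Fdual{\cdot}$ unfold one side at a time, so the two components drift apart syntactically while remaining tree-equal. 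Your case split (test whether $S_2$ is a $\mu$ first, so the unfold-on-the-left clause's side condition $S_2 \neq \rec{Y}{S_3}$ is respected) is exhaustive, since closed tail recursive types cannot be bare variables, and the substitution lemma for preservation of $\istailrec[\emptyset]{\cdot}$---with its second, independent use of closedness of messages in the input/output cases---is the correct auxiliary device. What the paper's approach buys is brevity; what yours buys is independence from the Gay--Hole formulation and a proof that could be mechanized directly against Definition~\ref{def:duality-relation-mu}. If you write it up in full, make explicit two facts you use silently when placing $(S_1,\dual{\unfold{S_2}})$ back into $\D$: reflexivity and transitivity of $\sequiv$ (immediate from Proposition~\ref{prop:type-equi-tree-equality}, since $\sequiv$ is tree equality) and preservation of contractivity under unfolding, so that $\D$ really is a relation on $\SType$.
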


This supports the Bernardi-Hennessy approach, because it shows that if a session type can be converted to an equivalent type that is tail recursive, then it is sufficient to apply naive duality to the tail recursive type.

Message closure builds a tail recursive session type
by collecting substitutions $\subst {\rec XS}X$ for each $\rec XS$ type
encountered and applying the accumulated substitution to
messages.
%
%In Section~\ref{sec:tail-recursive} we prove that the message closure
%of a session type is indeed tail recursive.
%
To define message closure we need the notion of a
sequence of substitutions.

\begin{definition}[Sequence of Substitutions]
  A \emph{sequence of substitutions} is given by the following grammar:
  \begin{equation*}
    \sigma \grmeq \Empty \grmor \subst SX; \sigma
  \end{equation*}
  The \emph{application of a sequence of substitutions} $\sigma$ to a type
  $T$---notation $T\sigma$---is defined as $T\Empty = T$ and
  $T(\subst SX;\sigma) = (T\subst SX)\sigma$.
  A sequence of substitutions $\sigma$ is \emph{closing} for $T$ if $\fv(T\sigma) = \Empty$.
\end{definition}

\begin{definition}[Message Closure \cite{BernardiH16}]
  \label{def:mclos}
  For any type $T$ and sequence of substitutions $\sigma$ closing for $T$, the type
  $\mclo{T}{\sigma}$ is defined inductively by the following rules.
  \begin{align*}
    \mclo{\End}{\sigma} & =\, \End &
     \mclo{X}{\sigma} & =\, X
    \\
    \mclo{\outp{T}{S}}{\sigma} & =\, \outp{(T\sigma)}{\mclo{S}{\sigma}} &
    \mclo{\rec{X}{S}}{\sigma} & =\, \rec{X}{\mclo{S}{\subst{(\rec{X}{S})}{X};\sigma}}
    \\
    \mclo{\inp{T}{S}}{\sigma} & =\, \inp{(T\sigma)}{\mclo{S}{\sigma}} &
  \end{align*}
  Define $\mcl{S}$ as $\mclo{S}{\Empty}$. % where $\ID$ is the identity substitution.
\end{definition}

Bernardi and Hennessy prove that taking the naive dual of the message
closure of a type is sound with respect to a notion of compatibility
based on a labelled transition system for session types. We will prove soundness with respect to regular trees. First, however, we show that if $S\in\SType$ then $\mcl S$ is tail recursive.

The next two lemmas are easily proved by induction.
\begin{lemma}
  \label{lem:free-vars-tailrec}
  If $\istailrec T$, then $\fv(T) \subseteq \varseq$.
\end{lemma}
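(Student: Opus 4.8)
The plan is to proceed by rule induction on the derivation of $\istailrec T$, showing $\fv(T)\subseteq\varseq$ in each case. Since the type formation rules are syntax-directed, this is equivalent to structural induction on $T$ with the variable set $\varseq$ tracked in the judgement. The three base cases are immediate: for $\Int$ and $\End$ we have $\fv(T)=\emptyset$, which is contained in any $\varseq$; and for the variable rule the side condition $X\in\varseq$ yields $\fv(X)=\{X\}\subseteq\varseq$ directly.

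For the message cases $\inp ST$ and $\outp ST$, the premises are $\istailrec[\emptyset]S$ and $\istailrec T$. Applying the induction hypothesis to the first premise gives $\fv(S)\subseteq\emptyset$, hence $\fv(S)=\emptyset$, while the second gives $\fv(T)\subseteq\varseq$. Since $\fv(\inp ST)=\fv(S)\cup\fv(T)=\fv(T)$, we conclude $\fv(\inp ST)\subseteq\varseq$, and the output case is symmetric. This is exactly where the design of the formation system pays off: requiring the message type to be well-formed over the empty context forces it to be closed, so it contributes no free variables to the enclosing type.

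The one case that merits any care is the recursion rule, which derives $\istailrec{\rec XT}$ from $\istailrec[\varseq,X]T$. Here the induction hypothesis gives $\fv(T)\subseteq\varseq\cup\{X\}$, and since the binder removes $X$ we obtain $\fv(\rec XT)=\fv(T)\setminus\{X\}\subseteq(\varseq\cup\{X\})\setminus\{X\}\subseteq\varseq$; note the final inclusion holds whether or not $X$ already belongs to $\varseq$, so no appeal to the Barendregt convention is strictly needed. This is the only step where the variable-set bookkeeping of the judgement actually does work, so it is the place to be careful; all remaining reasoning is routine set manipulation.
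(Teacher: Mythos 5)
Your proof is correct and matches the paper's approach: the paper states that this lemma is ``easily proved by induction,'' and your rule induction on the derivation of $\istailrec T$ is exactly that routine argument, with each case handled properly (including the recursion case, where $\fv(\rec XT)=\fv(T)\setminus\{X\}\subseteq(\varseq\cup\{X\})\setminus\{X\}\subseteq\varseq$). Nothing is missing.
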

%
%\begin{proof}
%  A straightforward rule induction on the hypothesis.
%\end{proof}

\begin{lemma}[Strengthening]
  \label{lem:strengthening}
  If $\istailrec[\varseq,X] T$ and  $X\notin\fv(T)$, then $\istailrec T$.
\end{lemma}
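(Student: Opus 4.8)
The plan is to prove the statement by induction on the structure of $T$, equivalently on the derivation of $\istailrec[\varseq,X]T$, whose rules are syntax-directed so that inversion is unambiguous. Because the variable set changes under a binder, I would state the induction hypothesis with $\varseq$ and $X$ universally quantified, so that it can be reinstantiated at a larger context in the $\mu$ case.

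The base cases $T=\Int$ and $T=\End$ are immediate, since the corresponding rules derive $\istailrec T$ in any context and $X\notin\fv(T)$ holds vacuously. For a variable $T=Y$, inversion gives $Y\in\varseq\cup\{X\}$; the hypothesis $X\notin\fv(Y)=\{Y\}$ forces $Y\neq X$, hence $Y\in\varseq$ and $\istailrec Y$. For $T=\inp{S}{T'}$ (and symmetrically $T=\outp{S}{T'}$), inversion yields $\istailrec[\emptyset]S$ and $\istailrec[\varseq,X]{T'}$. The message premise $\istailrec[\emptyset]S$ is over the empty context and hence unaffected by dropping $X$; indeed, by Lemma~\ref{lem:free-vars-tailrec} the message type $S$ is closed. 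Since $X\notin\fv(\inp{S}{T'})$ gives $X\notin\fv(T')$, the induction hypothesis applied to $T'$ delivers $\istailrec{T'}$, and reapplying the input rule gives $\istailrec{\inp{S}{T'}}$.

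The only case needing care is $T=\rec{Y}{T'}$. Here I would invoke the Barendregt convention to assume the bound variable $Y$ is distinct from $X$ and absent from $\varseq$. Inversion gives $\istailrec[\varseq,X,Y]{T'}$, which I reorganise as $(\varseq\cup\{Y\}),X\vdash T'$. From $X\notin\fv(\rec{Y}{T'})$ together with $Y\neq X$ I obtain $X\notin\fv(T')$, so the induction hypothesis at context $\varseq\cup\{Y\}$ yields $\istailrec[\varseq,Y]{T'}$, and the $\mu$ rule concludes $\istailrec{\rec{Y}{T'}}$.

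I expect the main, and essentially the only, obstacle to be the bookkeeping in this last case: the context grows under the binder, so the induction must be general enough to be reapplied at the enlarged context $\varseq\cup\{Y\}$, and the appeal to the Barendregt convention (ensuring $Y\neq X$) is precisely what lets the free-variable side condition descend through the binder. Everything else reduces to routine inversion and reapplication of the formation rules.
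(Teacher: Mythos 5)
Your proof is correct and matches the paper's approach: the paper simply notes that this lemma is ``easily proved by induction,'' and your detailed structural induction---with $\varseq$ and $X$ universally quantified so the hypothesis can be reinstantiated at the enlarged context in the $\rec{Y}{T'}$ case, and the Barendregt convention ensuring $Y\neq X$---is exactly the routine argument the authors had in mind. All case analyses (inversion, the closed message premise over the empty context, and the variable case forcing $Y\in\varseq$) are handled correctly.
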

%
%\begin{proof}
%  A straightforward rule induction on the hypothesis.
%\end{proof}
Combining them, we can identify exactly the type variables that occur free in message positions.
\begin{corollary}
  \label{prop:tailrec-restriction}
  If $\istailrec T$, then $\istailrec[\fv(T)]T$.
\end{corollary}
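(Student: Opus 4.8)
The plan is to combine the two preceding results: Lemma~\ref{lem:free-vars-tailrec} tells us which variables can possibly be needed, and Strengthening (Lemma~\ref{lem:strengthening}) lets us discard the rest. Starting from the hypothesis $\istailrec{T}$, I would first invoke Lemma~\ref{lem:free-vars-tailrec} to obtain $\fv(T)\subseteq\varseq$. Thus the context $\varseq$ decomposes as $\fv(T)$ together with the surplus variables $\varseq\setminus\fv(T)$, none of which occurs free in $T$, and the goal is to strip those surplus variables from the context.

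The argument then proceeds by induction on the cardinality of $\varseq\setminus\fv(T)$. In the base case this set is empty, so $\varseq=\fv(T)$ and the desired conclusion $\istailrec[\fv(T)]{T}$ is exactly the hypothesis. In the inductive step, pick any $X\in\varseq\setminus\fv(T)$ and write $\varseq$ as $\varseq',X$ with $X\notin\varseq'$. Because $X\notin\fv(T)$, Strengthening applied to $\istailrec[\varseq',X]{T}$ yields $\istailrec[\varseq']{T}$. Since $X\notin\fv(T)$ we still have $\fv(T)\subseteq\varseq'$, while $\varseq'\setminus\fv(T)$ has strictly smaller cardinality, so the induction hypothesis delivers $\istailrec[\fv(T)]{T}$.

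The one point requiring care is that Strengthening is stated for a variable appearing at the right end of the context, whereas here I remove an arbitrary surplus variable. This is harmless precisely because the formation judgment treats the context as a \emph{set} of type variables: $\varseq',X$ denotes $\varseq'\cup\{X\}$, so any surplus variable may be presented in the required position and its order relative to the others is immaterial. I expect no genuine obstacle here; the result is essentially a bookkeeping combination of the two lemmas, with the only subtlety being the set-versus-sequence reading of the context.
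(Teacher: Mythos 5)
Your proof is correct and matches the paper's argument exactly: the paper likewise applies Lemma~\ref{lem:free-vars-tailrec} to get $\fv(T)\subseteq\varseq$ and then uses Strengthening (Lemma~\ref{lem:strengthening}) repeatedly to discard the surplus variables, which is precisely your cardinality induction spelled out. Your observation that $\varseq$ is a \emph{set} (as Definition~\ref{def:tail-rec-types} states), so any surplus variable can be taken as the one removed by Strengthening, correctly resolves the only subtlety.
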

\begin{proof}
  From Lemma~\ref{lem:free-vars-tailrec} we know that
  $\fv(T) \subseteq \varseq$. Use Strengthening
  (Lemma~\ref{lem:strengthening}) repeatedly to remove from~$\varseq$
  type variables not in $\fv(T)$.
\end{proof}

Finally, we reason about $\mclo{T}{\sigma}$.
\begin{lemma}
  \label{lemma:mclosure}
  If $\sigma$ is a closing substitution for $T$, then $\istailrec[\dom(\sigma)]{\mclo T\sigma}$.
\end{lemma}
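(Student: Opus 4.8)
The plan is to prove this by induction on the structure of the type $T$, generalizing over the closing substitution $\sigma$ so that the induction hypothesis is strong enough to handle the recursive case. The statement to establish is: for every type $T$ and every $\sigma$ closing for $T$, we have $\istailrec[\dom(\sigma)]{\mclo T\sigma}$. I would first spell out what $\dom(\sigma)$ means for the grammar of substitution sequences: $\dom(\Empty)=\emptyset$ and $\dom(\subst SX;\sigma)=\{X\}\cup\dom(\sigma)$.

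For the base cases, $\mclo{\End}\sigma=\End$ is derivable under any $\varseq$ by the axiom for $\End$, and similarly $\mclo{\Int}\sigma=\Int$ (if we include the $\Int$ clause implicitly). The interesting base case is a variable: $\mclo X\sigma=X$, and the rule for variables requires $X\in\dom(\sigma)$. This is exactly where the closing hypothesis is essential, since $\sigma$ closing for $X$ means $\fv(X\sigma)=\emptyset$, which forces $X\in\dom(\sigma)$; otherwise $X\sigma=X$ would have $X$ free. For the message cases, say $\mclo{\outp TS}\sigma=\outp{(T\sigma)}{\mclo S\sigma}$, the typing rule for output demands two premises: $\istailrec[\emptyset]{T\sigma}$ and $\istailrec[\dom(\sigma)]{\mclo S\sigma}$. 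The second follows from the induction hypothesis applied to $S$ (noting $\sigma$ remains closing for $S$). The first premise, $\istailrec[\emptyset]{T\sigma}$, requires that the message type $T\sigma$ be closed, i.e.\ have no free variables at all; this is precisely the property guaranteed because $\sigma$ closes the whole type, so in particular it closes every message position.

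The main obstacle, and the case deserving the most care, is recursion: $\mclo{\rec XS}\sigma=\rec X{\mclo S{\subst{(\rec XS)}X;\sigma}}$. The typing rule for $\mu$ requires $\istailrec[\dom(\sigma),X]{\mclo S{\subst{(\rec XS)}X;\sigma}}$. Writing $\sigma'=\subst{(\rec XS)}X;\sigma$, I would invoke the induction hypothesis on the \emph{sub}type $S$ with the \emph{extended} substitution $\sigma'$. This yields $\istailrec[\dom(\sigma')]{\mclo S{\sigma'}}$, and since $\dom(\sigma')=\{X\}\cup\dom(\sigma)$, this is exactly the required premise. The two things to verify carefully here are: (i) that $\sigma'$ is indeed closing for $S$, which holds because $S\subst{(\rec XS)}X$ substitutes the closed-up recursive type for the only variable $X$ could introduce beyond those $\sigma$ already handles, so $\fv(S\sigma')=\emptyset$; and (ii) that the induction is genuinely on a structurally smaller type, which it is, since $S$ is a proper subterm of $\rec XS$ regardless of how $\sigma$ grows. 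The subtlety worth flagging is that the substitution sequence gets longer as we descend, so the induction must be framed on the type $T$ alone with $\sigma$ universally quantified, not on any measure involving $\sigma$.
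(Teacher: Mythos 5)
Your induction is exactly the one the paper intends: its entire proof is ``straightforward induction on the definition of $\mclo T\sigma$'', which coincides with your structural induction on $T$ with $\sigma$ universally quantified. You also correctly identify and discharge the two cases that carry the real content and that the paper's one-liner elides: the variable case, where $\fv(X\sigma)=\emptyset$ forces $X\in\dom(\sigma)$, and the $\mu$ case, where the generalized hypothesis is applied to the subterm $S$ under the extended sequence $\subst{(\rec{X}{S})}{X};\sigma$, whose domain is $\dom(\sigma)\cup\{X\}$ as the $\mu$-rule demands.

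There is, however, a genuine flaw in your message case, namely the claim that the premise $\istailrec[\emptyset]{T\sigma}$ ``requires that the message type $T\sigma$ be closed, i.e.\ have no free variables at all''. Under the rules of Definition~\ref{def:tail-rec-types} that judgement is strictly stronger than closedness: it demands, hereditarily, that every message occurring inside $T\sigma$ itself satisfy $\istailrec[\emptyset]{\cdot}$. Closedness of $T\sigma$ (which does follow from $\sigma$ being closing, as you say) does not establish the premise, because $\mclo{\cdot}{\cdot}$ never recurses into message positions, so the substituted type may contain its own recursion variable in a message position under its binder. Concretely, for the paper's running example $S=\rec{X}{\outp{X}{X}}$ we have $\mcl{S}=\rec{X}{\outp{S}{X}}$, and completing your derivation requires $\istailrec[\emptyset]{S}$, which is underivable (it reduces to $\istailrec[\emptyset]{X}$). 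So your ``i.e.'' papers over precisely the point where the induction breaks --- and indeed, read literally against the printed rules, the lemma itself fails on this example, so no proof could close the gap without reinterpreting the definition. Everything goes through (and your argument becomes complete) if the message premise of the input/output rules is taken to be mere closedness, $\fv(S)=\emptyset$, which matches the paper's prose description of tail recursion (``session types in which all message types are closed'') and suffices for the intended application to naive duality; in fairness, the paper's own proof glosses over exactly the same step.
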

\begin{proof} 
  Straightforward induction on the definition of $\mclo T\sigma$.
%
%  Case $\End$. We have $\mclo{\End}{\sigma} = \End$, hence
%  $\istailrec[\dom(\sigma)]{\mclo{\End}{\sigma}}$ using the rule for $\End$.
%
%  Case $\Int$. Similar to $\End$
%
%  Case $\outp TS$. Noting that $\sigma$ is also a closing substitution
%  for $T\sigma$, by induction we have
%  $\istailrec[\dom(\sigma)]{T\sigma}$. By
%  Corollary~\ref{prop:tailrec-restriction}, we have
%  $\istailrec[\Empty]{T\sigma}$. By induction we further have
%  $\istailrec[\dom(\sigma)]{\mclo S\sigma}$. Complete the case with
%  the rule for output.
%
%  Case $\inp TS$. Similar to $\outp TS$.
%
%  Case $X$. Similar to $\End$ and $\Int$.
%
%  Case $\rec XS$. Let $\sigma'=\sigma\subst{\rec X{S\sigma}}X$.
%  Noting that $\sigma'$ is a closing substitution for $S$ and
%  $\dom(\sigma),X = \dom(\sigma')$, by induction we have
%  $\istailrec[\dom(\sigma),X]{\mclo S{\sigma'}}$. Complete the case with
%  the rule for the recursive type.
\end{proof}

\begin{corollary}
  If $T$ is closed, then $\mcl T$ is tail recursive.
\end{corollary}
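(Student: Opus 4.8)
The plan is to obtain the corollary as a direct specialisation of Lemma~\ref{lemma:mclosure} to the empty sequence of substitutions. First I would recall that $\mcl T$ is by definition the abbreviation $\mclo T\Empty$ introduced at the end of Definition~\ref{def:mclos}. Since $T$ is closed we have $\fv(T)=\Empty$, and because $T\Empty = T$, the empty sequence $\Empty$ is closing for $T$. This verification is the only place where the hypothesis that $T$ is closed gets used, so it is worth spelling out explicitly.

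With the empty sequence shown to be closing for $T$, I would instantiate Lemma~\ref{lemma:mclosure} at $\sigma = \Empty$, obtaining $\istailrec[\dom(\Empty)]{\mclo T\Empty}$. It then remains only to simplify the two trivial identities appearing in this conclusion. On the one hand, $\dom(\Empty)=\emptyset$, since the empty sequence substitutes for no variables. On the other hand, $\mclo T\Empty = \mcl T$ by definition. Substituting both into the conclusion of the lemma yields $\istailrec[\emptyset]{\mcl T}$.

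Finally, I would invoke the definition of tail recursive types (Definition~\ref{def:tail-rec-types}), according to which a type is tail recursive exactly when it is derivable under the empty set of type variables, i.e.\ when $\istailrec[\emptyset]{\cdot}$ holds. Hence $\mcl T$ is tail recursive, as required. There is essentially no obstacle here: the whole content of the statement is already carried by Lemma~\ref{lemma:mclosure}, and the corollary is its specialisation. The only subtlety, as noted above, is confirming that $\Empty$ is genuinely closing for the closed type $T$ and that its domain is empty, so that the conclusion of the lemma lands exactly on the empty-context judgement that defines tail recursiveness.
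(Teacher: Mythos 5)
Your proof is correct and is essentially the paper's own argument: both instantiate Lemma~\ref{lemma:mclosure} at $\sigma = \Empty$, observing that the empty sequence is closing for the closed type $T$, that $\dom(\Empty)=\emptyset$, and that $\mclo{T}{\Empty} = \mcl{T}$ by definition. Your version merely spells out these trivial identities slightly more explicitly than the paper does.
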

\begin{proof}
  If $T$ is closed, then $\Empty$ is a closing substitution for
  $T$. Lemma~\ref{lemma:mclosure} ensures that
  $\istailrec[\emptyset]{\mclo T\Empty}$, hence
  $\istailrec[\emptyset]{\mcl T}$ by definition.
\end{proof}

\begin{example}
  \label{ex:bh-duality}
  The Bernardi-Hennessy approach to duality applied to our
  running example $S= \rec X{\outp XX}$.
\begin{align*}
  \dual{\mcl{S}} = \dual{\mclo{S}{\Empty}}
  &= \dual{\rec X {\mclo{\outp XX}{\subst SX}}}\\
  &= \dual{\rec X {(\outp{X\subst SX)}{\mclo{X}{\subst SX}}}}\\
  &= \dual{\rec X {\outp{S}{X}}} = \rec X {\dual{\outp SX}} = \rec X {\inp S\dual X} = \rec X {\inp SX}
\end{align*}  
\end{example}

It turns out that the two steps---application of message closure and
the computation of naive duality---can be combined into a single step,
performing message closure during the process of computing the dual
type.
This is captured by the definition below, which constructs the dual of a type in a single pass
over its abstract syntax tree.

\begin{definition}[Duality with On-the-fly Message Closure]
  \label{def:mcduality}
  For any session type $S$ and sequence of substitutions $\sigma$ closing for $S$, the
  session type $\mcduals S$ is defined inductively by the
  following rules.
  \begin{align*}
    \mcduals{\End} & =\, \End &
    \mcduals{X} & =\, X \\
    \mcduals{\outp{T}{S}} & =\, \inp{(T\sigma)}{\mcduals{S}} &
    \mcduals{\rec{X}{S}} & =\, \rec{X}{\mcduals[\subst{(\rec{X}{S})}{X};\sigma]{S}} \\
    \mcduals{\inp{T}{S}} & =\, \outp{(T\sigma)}{\mcduals{S}}
  \end{align*}
  Define $\mcdual S$ as $\mcduals[\Empty] S$.
\end{definition}

\begin{example}
  Here is duality with on-the-fly message closure in action for our
  running example $S= \rec X{\outp XX}$.
  \begin{align*}
    \mcdual S = \mcduals[\Empty]{S}
    &= \rec X {\mcduals[\subst SX]{\outp XX}}\\
    &= \rec X {(\inp{X\subst SX)}{\mcduals[\subst SX]{X}}}\\
    &= \rec X {\inp{S}{\mcduals[\subst SX]{X}}}\\
    &= \rec X {\inp SX}
  \end{align*}
  The economy with respect to the original definition,
  Example~\ref{ex:bh-duality}, should be apparent.
\end{example}

\begin{example}
Consider the problematic type of Bernardi and
Hennessy~\cite{BernardiH14}. Let $S_2 = \rec Y{\outp YX}$ and
$S_1= \rec X{S_2}$. We then have:
\begin{align*}
  \mcdual{S_1} = \mcduals[\Empty]{{S_1}}
  &= \rec X {\mcduals[\subst {S_1}X]{S_2}}\\
  &= \rec X {\rec Y {\mcduals[\subst {S_1}X\subst {S_2}Y]{\outp YX}}}\\
  &= \rec X {\rec Y {\inp {(Y\subst {S_1}X\subst {S_2}Y)}{\mcduals[\subst {S_1}X\subst {S_2}Y] X}}}\\
  &= \rec X {\rec Y {\inp {S_2}{\mcduals[\subst {S_1}X\subst {S_2}Y] X}}}\\
  &= \rec X {\rec Y {\inp {S_2} X}}
\end{align*}
\end{example}
Applying message closure on the fly does not change the tail recursive type that we obtain.
\begin{proposition}
\label{prop:mcdual-dualmcl}
  If $S\in\SType$ then $\mcdual{S}$ is syntactically equal to $\dual{\mcl{S}}$.
\end{proposition}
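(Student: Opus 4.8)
The plan is to prove the statement $\mcdual{S} = \dual{\mcl{S}}$ by establishing a more general claim relating the on-the-fly version $\mcduals S$ to the composite $\dual{\mclo{S}{\sigma}}$ for an arbitrary closing substitution $\sigma$, and then instantiating $\sigma = \Empty$. The natural generalization to prove by induction is:
$$\mcduals{S} = \dual{\mclo{S}{\sigma}} \quad\text{for every closing } \sigma \text{ for } S.$$
The reason to generalize is that both $\mclo{\cdot}{\cdot}$ and $\mcduals{\cdot}$ thread and extend the substitution $\sigma$ in exactly the same way at the $\mu$-case (both prepend $\subst{(\rec{X}{S})}{X}$), so the induction hypothesis must be stated for all $\sigma$ rather than the fixed empty one. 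Note that $\dual{\cdot}$ takes no $\sigma$ argument, so on the right-hand side the substitution work is entirely performed by message closure, and $\dual{\cdot}$ is then applied to the already-closed result.

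The proof proceeds by structural induction on $S$, following the shared inductive structure of Definitions~\ref{def:mclos} and~\ref{def:mcduality}. For the base cases, $\mclo{\End}{\sigma} = \End$ so $\dual{\mclo{\End}{\sigma}} = \dual{\End} = \End = \mcduals{\End}$, and similarly $\dual{\mclo{X}{\sigma}} = \dual{X} = X = \mcduals{X}$, using $\dual{X}=X$ and $\dual{\End}=\End$ from naive duality. For the output case, $\dual{\mclo{\outp{T}{S}}{\sigma}} = \dual{\outp{(T\sigma)}{\mclo{S}{\sigma}}} = \inp{(T\sigma)}{\dual{\mclo{S}{\sigma}}}$, and by the induction hypothesis $\dual{\mclo{S}{\sigma}} = \mcduals{S}$, which matches $\mcduals{\outp{T}{S}} = \inp{(T\sigma)}{\mcduals{S}}$. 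The input case is symmetric, exchanging the roles of $!$ and $?$.

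The only case that requires care is the recursion case. Here the left side gives $\dual{\mclo{\rec{X}{S}}{\sigma}} = \dual{\rec{X}{\mclo{S}{\subst{(\rec{X}{S})}{X};\sigma}}} = \rec{X}{\dual{\mclo{S}{\subst{(\rec{X}{S})}{X};\sigma}}}$, using the $\mu$-clause of naive duality. The right side gives $\mcduals{\rec{X}{S}} = \rec{X}{\mcduals[\subst{(\rec{X}{S})}{X};\sigma]{S}}$. These two bodies agree precisely by the induction hypothesis applied with the extended substitution $\sigma' = \subst{(\rec{X}{S})}{X};\sigma$, which is why the general statement quantifying over all closing $\sigma$ is essential rather than a mere convenience.

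The main obstacle I anticipate is not the algebraic matching of clauses, which is routine, but the bookkeeping needed to justify that the induction hypothesis is legitimately applicable in the recursion case. One must check that $\sigma' = \subst{(\rec{X}{S})}{X};\sigma$ is indeed closing for the body $S$ whenever $\sigma$ is closing for $\rec{X}{S}$, so that both $\mclo{S}{\sigma'}$ and $\mcduals[\sigma']{S}$ are well-defined and the outer $\dual{\cdot}$ is applied to a closed type (recall that naive duality on a type with free message variables is exactly the unsound operation the paper is avoiding). Concretely, $\fv(\rec{X}{S}) = \fv(S) \setminus \{X\}$, and prepending the substitution $\subst{(\rec{X}{S})}{X}$ eliminates the extra free occurrences of $X$ in $S$, so closedness is preserved as required. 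Once this well-definedness side-condition is discharged, the induction goes through mechanically and instantiating $\sigma = \Empty$ yields $\mcdual{S} = \dual{\mcl{S}}$.
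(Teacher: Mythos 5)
Your proposal is correct and takes essentially the same route as the paper: the paper's proof is exactly a structural induction on $S$ of the generalized claim relating $\mcduals{S}$ to $\dual{\mclo{S}{\sigma}}$ for an arbitrary closing $\sigma$, instantiated at $\sigma=\Empty$ (the paper's one-line statement of this invariant appears to omit the naive dual on the right-hand side, which you state correctly). Your explicit check that $\subst{(\rec{X}{S})}{X};\sigma$ remains closing for $S$ in the $\mu$-case is bookkeeping the paper leaves implicit.
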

\begin{proof}
Prove by structural induction on $S$ that for any sequence of substitutions $\sigma$ closing for $S$, $\mcduals[\sigma]{S}$ is syntactically equal to $\mclo{S}{\sigma}$.
\end{proof}

We can now show that duality with on-the-fly
message closure is sound with respect to duality on regular trees.
\begin{proposition}
If $S\in\SType$ then $\treeof{\dual{\mcl{S}}} \dualrel\treeof{S}$.
\end{proposition}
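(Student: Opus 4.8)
The plan is to obtain the statement by chaining results already in hand, reducing everything to a single equivalence lemma. Since $S$ is closed, the corollary that $\mcl S$ is tail recursive applies. Soundness of Naive Duality for Tail Recursive Types then gives $\mcl S \perp \dual{\mcl S}$, which by Proposition~\ref{prop:session-correctness} is $\treeof{\mcl S} \dualrel \treeof{\dual{\mcl S}}$. The tree relation $\dualrel$ is symmetric---Definition~\ref{def:tree-duality} presents it as a self-inverse bijection---so $\treeof{\dual{\mcl S}} \dualrel \treeof{\mcl S}$. Hence it suffices to prove $\treeof{\mcl S} = \treeof S$: rewriting the right-hand side then yields $\treeof{\dual{\mcl S}} \dualrel \treeof S$, as required.

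The remaining obligation, $\treeof{\mcl S} = \treeof S$, is equivalent by Proposition~\ref{prop:type-equi-tree-equality} to the statement $\mcl S \sequiv S$ that message closure preserves type equivalence, and this is the heart of the proof. A naive structural induction comparing $\mclo S\sigma$ with $S\sigma$ does not work: message closure retains every recursion binder and every tail variable verbatim, closing only message types, so at a tail occurrence we have $\mclo X\sigma = X$ while $X\sigma$ is the closed unfolding, and neither $\mclo S\sigma$ nor its intended partner is closed in isolation.

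The fix is to keep the retained binders in view by instantiating the two sides in lock-step. I would show that the relation $\E$ consisting of all pairs $(\mclo{S'}\sigma\,\theta,\ S'\sigma)$ is a type bisimulation, where $S'$ ranges over the subterms reached while descending into $S$, the sequence $\sigma = \subst{\rec{X_n}{U_n}}{X_n};\cdots;\subst{\rec{X_1}{U_1}}{X_1}$ records the unfoldings of the enclosing binders $X_1,\dots,X_n$, and $\theta$ substitutes the \emph{message-closed} counterparts $\mcl{\rec{X_i}{U_i}}$ for the same variables. Both components are then closed, and at every message position the two sides carry the identical closed type $T\sigma$, so the input and output clauses of $\Fequiv{\E}$ are discharged at once (the message pair being a reflexive pair, which I would also place in $\E$).

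The delicate cases are recursion and the bound variable, and this bookkeeping is where I expect the real work. For $\rec X{S_0}$ I would strip the two $\mu$-binders using the last two clauses of $\Fequiv{\E}$; the commutation of substitution (justified by the Barendregt convention) shows that the exposed right component is exactly $S_0\sigma_1$ and the left is $\mclo{S_0}{\sigma_1}\,\theta_1$, where $\sigma_1$ and $\theta_1$ extend $\sigma$ and $\theta$ by $\subst{\rec X{S_0}}X$ and its message-closed image---so the successor pair is again in $\E$. For a bound variable $X_i$ the pair collapses to $(\mcl{\rec{X_i}{U_i}},\ \rec{X_i}{U_i})$, once more a member of $\E$, and contractivity guarantees that this coinductive unfolding is well founded. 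The single point needing genuine care is keeping $\sigma$ (plain unfoldings) and $\theta$ (message-closed unfoldings) synchronised as binders are entered; everything else is routine.
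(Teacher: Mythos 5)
Your proof is correct, but it takes a genuinely different route from the paper's. The paper never proves that message closure preserves type equivalence; instead it rewrites $\dual{\mcl{S}}$ into the Lindley--Morris form via two syntactic identities, $\dual{\mcl{S}} = \mcdual{S}$ (Proposition~\ref{prop:mcdual-dualmcl}) and $\mcdual{S} = \cdual{S}$ (Proposition~\ref{prop:mcdual-cdual}), and then appeals to $\treeof{\cdual{S}}\dualrel\treeof{S}$ (Proposition~\ref{thm:lmdual}), itself proved by exhibiting $\{(S,\cdual{S}) \mid S\in\SType\}$ as a session duality. You stay entirely inside the Bernardi--Hennessy toolkit of Section~\ref{sec:duality-bh}: tail recursiveness of $\mcl{S}$, the Gay--Hole soundness proposition for tail recursive types, symmetry of $\dualrel$ (legitimately read off from its self-inverse-function presentation in Definition~\ref{def:tree-duality}), and the new lemma $\mcl{S}\sequiv S$, which you prove by a bespoke type bisimulation. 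Your decomposition directly vindicates the two-step Bernardi--Hennessy recipe (step one produces an equivalent type, step two is sound on its output) and needs nothing from Section~\ref{sec:duality-lm}; the paper's detour is technically cheaper (two structural inductions plus one session-duality check) and yields the coincidence of the BH and LM definitions as a byproduct, whereas your route pays for its self-containedness with exactly the substitution bookkeeping you flag. On that bookkeeping, two refinements are needed for your $\E$ to be literally a bisimulation. First, the $\theta$-entry for a binder $\rec{X_i}{U_i}$ cannot be $\mcl{\rec{X_i}{U_i}}$: that subterm is in general open, so $\Empty$ is not closing for it and $\mcl{\rec{X_i}{U_i}}$ is undefined; the correct entry is $\mclo{\rec{X_i}{U_i}}{\sigma_{i-1}}\,\theta_{i-1}$, the closure computed under the substitution prefix in force when that binder was entered, with the outer $\theta$ applied---your $\mu$-case phrase ``its message-closed image'' has this right, so the slip is only in the stated definition of $\E$. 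Second, since contractivity permits consecutive binders, stripping the two $\mu$'s may require several applications of the last two clauses of $\Fequiv{\E}$, so $\E$ must also contain the partially unfolded intermediate pairs; contractivity guarantees this chain is finite, as you note.
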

\begin{proof}
Instead of proving this directly, we go via definitions and results from Section~\ref{sec:duality-lm}. Proposition~\ref{prop:mcdual-dualmcl} shows that $\dual{\mcl{S}} = \mcdual{S}$. Proposition~\ref{prop:mcdual-cdual} shows that $\mcdual{S} = \cdual{S}$, where $\cdual{S}$ is defined in Definition~\ref{def:lm-nonewneg}. Therefore $\treeof{\dual{\mcl{S}}} = \treeof{\cdual{S}}$. Finally, Proposition~\ref{thm:lmdual} shows that $\treeof{\cdual{S}}\asymp\treeof{S}$.
\end{proof}

%%% Local Variables:
%%% mode: latex
%%% TeX-master: "main"
%%% End:

% !TeX root = main.tex

\section{Duality \`{a} la Lindley-Morris}
\label{sec:duality-lm}

The Lindley-Morris definition of the duality
function~\cite{LindleyM16} uses negative type variables $\dual{X}$,
which we therefore add to the syntax in Definition~\ref{def:types}. In
type $\rec{X}{T}$, both the positive variable $X$ and the negative
variable $\dual{X}$ are bound in $T$. Corresponding to this extension,
we generalise the definition of $\treeof{\cdot}$
(Definition~\ref{def:treeof}) so that $\coidual{\cdot}$ (Definition~\ref{def:tree-duality}) is applied to the subtrees that arise from negative variables.

\begin{example}
  Let $S$ be the type $\rec{X}{\outp{X}{\dual X}}$. Let $s = \treeof{S}$ and let $t = \coidual{s}$. Tree $s$ 
  can be depicted as (i) below. To obtain tree $t$, (ii) below, we dualise the root,
  keep $s$ for the left subtree and use the dual of $t$ (that is,
  $s$) for the right subtree (cf.\ the rule
  $\dual{\outp{T}{S}} = \inp{T}{\dual{S}}$ in
  Definition~\ref{def:naiveduality}).
  Substituting $t$ into tree (i) gives tree (iii).
  Tree (iv) shows a few more nodes in the expansion of~$s$.
\[
\begin{array}{c@{\extracolsep{10mm}}ccc}
(i) & (ii) & (iii) & (iv) \\
  s=\Tree[.$!$ s t ]
&
  t=\Tree[.$?$ s s ]
&
  s=\Tree[.$!$ s [.$?$ s s ] ]
&
  s=\Tree[.$!$ [.$!$ [.$!$ s t ] [.$?$ s s ] ] [.$?$ [.$!$ s t ] [.$!$ s t ] ] ]
\end{array}
\]
  % A different example:
  % Let $S$ be the type $\rec{X}{\outp{\dual X}{X}}$. A type $T$ dual to
  %   $S$ is of the form $\rec{Y}{\inp{Y}{Y}}$. Let let $s = \treeof{S}$
  %   and $t = \treeof T$. Then $s$ can be depicted as follows.
  
  % \hfill
  % \Tree[.$!$ [.$?$ [.$?$ t t ] [.$?$ t t ] ] [.$!$ [.$?$ t t ] [.$!$ t s ] ]]
  % \hfill
  % \hspace{0ex}
\end{example}

The definition of the duality function also requires a particular form
of substitution that exchanges negative variables $\dual X$ and
positive variables~$X$.

\begin{definition}[Negative Variable Substitution]
  The result of substituting $\dual X$ for the free occurrences of $X$
  in $T$---notation $T\lmsubst{\dual X}{X}$---is defined inductively as
  follows.
  \begin{align*}
    X\lmsubst{{\dual X}}{X} & =\, {\dual X} &
    \Int\lmsubst{{\dual X}}{X} & =\, \Int \\
    \dual X\lmsubst{{\dual X}}{X} & =\, X &
    \End\lmsubst{{\dual X}}{X} & =\, \End \\
    Y\lmsubst{{\dual X}}{X} & =\, Y  \text{~~if } Y\neq X &
    (\inp{S}{T})\lmsubst{{\dual X}}{X} & =\, \inp{(S\lmsubst{{\dual X}}{X})}{T\lmsubst{{\dual X}}{X}} \\
    \dual{Y}\lmsubst{{\dual X}}{X} & =\, \dual{Y} &
    (\outp{S}{T})\lmsubst{{\dual X}}{X} & =\, \outp{(S\lmsubst{{\dual X}}{X})}{T\lmsubst{{\dual X}}{X}} \\
    & & (\rec{Y}{S})\lmsubst{{\dual X}}{X} & =\, \rec{Y}{S\lmsubst{{\dual X}}{X}}
  \end{align*}
\end{definition}

\begin{definition}[Lindley-Morris Duality, Original Version \cite{LindleyM16}]
  \label{def:lmduality}
  \begin{align*}
    \lmdual(\End) & =\, \End &
    \lmdual(X) & =\, \dual{X} \\
    \lmdual(\inp{T}{S}) & =\, \outp{T}{\lmdual(S)} &
    \lmdual(\dual{X}) & =\, X \\
    \lmdual(\outp{T}{S}) & =\, \inp{T}{\lmdual(S)} &
    \lmdual(\rec{X}{S}) & =\, \rec{X}{(\lmdual(S)\lmsubst{\dual{X}}{X})}
  \end{align*}
\end{definition}
\begin{example}
%
%\begin{align*}
%  \lmdual{(\rec x{\outp xx})}
%  &= \rec x {\lmdual{((\outp xx)\subst{\dual x}x)}}\\
%  % &= \rec x {\lmdual{(\outp{\dual x}{\dual x})}}\\
%  % &= \rec x {\outp{\dual x}{\lmdual{\dual x}}}\\
%  % &= \rec x {\outp{\dual x}{x}\\
%  &= \rec x {(\outp x \lmdual x)\subst{\dual x}x}\\
%  &= \rec x {(\outp x \dual x)\subst{\dual x}x}\\
%  &= \rec x {\outp {x\subst{\dual x}x} {\dual x\subst{\dual x}x}}\\
%  &= \rec x {\outp {\dual x} {\dual x\subst{\dual x}x}}\\
%  &= \rec x {\outp {\dual x} x}
%\end{align*}
\begin{align*}
  \lmdual{(\rec X{\outp XX})}
  &= \rec X {\lmdual{((\outp XX)}\subst{\dual X}X)} = \rec X {(\inp X \lmdual(X))\subst{\dual X}X}\\
  &= \rec X {(\inp X \dual X)\subst{\dual X}X} = \rec X {\inp {X\subst{\dual X}X} {\dual X\subst{\dual X}X}}\\
  &= \rec X {\inp {\dual X} {\dual X\subst{\dual X}X}} = \rec X {\inp {\dual X} X}
\end{align*}
\end{example}

This definition of duality is sound with respect to trees.
\begin{proposition}
\label{prop:lmdual-sound}
If $S\in\SType$ then $\lmdual(S)\asymp S$.
\end{proposition}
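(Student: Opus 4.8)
The plan is to read $\lmdual(S)\asymp S$ as $\treeof{\lmdual(S)}\dualrel\treeof{S}$ and, since $S$ is closed and hence $\lmdual(S)$ is closed as well, to use Proposition~\ref{prop:session-correctness} to reduce the goal to the purely syntactic statement $\lmdual(S)\perp S$. I would prove the latter by exhibiting a session duality $\D$ with $(\lmdual(S),S)\in\D$ and checking $\D\subseteq\Fdual{\D}$. Recalling that $\dualrel$ is the self-inverse function $\coidual{\cdot}$, the content of the goal is the equation $\treeof{\lmdual(S)}=\coidual{\treeof{S}}$, and it is worth keeping this reading in mind because it is what explains why the message positions come out right.

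First I would establish a substitution lemma describing how $\lmdual$ interacts with the two kinds of variable. A routine structural induction on $T$ gives
\[
  \lmdual(T\subst{U}{X}) = \lmdual(T)\subst{\lmdual(U)}{\dual X}\subst{U}{X},
\]
where $\subst{\cdot}{\dual X}$ denotes substitution for the negative variable $\dual X$. The intuition is that $\lmdual$ copies message types verbatim and negates spine occurrences of a variable, so a spine occurrence of $X$ (which $\lmdual$ turns into $\dual X$) must be filled with $\lmdual(U)$, whereas a message occurrence (which $\lmdual$ leaves as $X$) is filled with $U$ unchanged. Specialising $U=\rec{X}{T}$ turns this into a statement that $\lmdual$ commutes with unfolding, which is exactly what the two recursion clauses of $\Fdual{\cdot}$ require.

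For the relation itself I would take $\D$ to be the closure of the pairs $(\lmdual(S),S)$ under the unfolding performed by $\Fdual{\cdot}$. To make this well defined I must extend unfolding to negative variables, declaring that $\dual X$ bound by $\rec{X}{T}$ unfolds to the dual of $\rec{X}{T}$; this is forced by, and consistent with, the reading of $\treeof{\cdot}$ on subtrees arising from negative variables. The verification $\D\subseteq\Fdual{\D}$ then splits by the shape of $S$: the $\End$ case is immediate; for $\inp{T}{S}$ and $\outp{T}{S}$ the head constructor is exchanged ($?\leftrightarrow{}!$) while $\lmdual$ copies the message type, so the continuation pair $(\lmdual(S),S)$ lies in $\D$ and the side condition relating the two message types reduces either to reflexivity of $\sequiv$ (when the message has not been touched by recursion) or to the collapse discussed below (when recursion has recorded a negated variable in it); the recursion cases are discharged by the substitution lemma, which lets me replace an unfolding of $\lmdual(\rec{X}{S})$ by $\lmdual$ of the unfolding of $\rec{X}{S}$, up to $\sequiv$.

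I expect the treatment of negative variables at message positions to be the main obstacle. When $\rec{X}{S}$ places $X$ inside a message type, $\lmdual$ eventually records it as $\dual X$, yet tree duality demands that the two corresponding message subtrees be \emph{equal} rather than dual. This is consistent precisely because $\dual X$ denotes the dual of the whole recursive type and $\coidual{\cdot}$ is self-inverse, so a dual-of-a-dual collapses back to the original message subtree. The delicate point is to choose the invariant defining $\D$ so that it is genuinely preserved by the coinductive step: the relation must be closed under simultaneously unfolding positive and negative occurrences, with the substitution lemma keeping $\lmdual$ and unfolding in step and the self-inverse property of $\coidual{\cdot}$ discharging the message side conditions. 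Once the invariant is set up correctly, the remaining cases are mechanical.
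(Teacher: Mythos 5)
Your strategy is sound, but it is genuinely different from what the paper does: the paper offers no syntactic proof of this proposition at all, deferring instead to the Agda mechanization (Proposition~\ref{thm:mech-lmdual}), which proves $\coidual{\embed S} = \embed{\lmdual(S)}$ corecursively on the tree semantics, where negative variables have already been resolved by the generalised $\treeof\cdot$ and therefore never appear in the coinductive argument. Your route---exhibit a session duality $\D$ containing $(\lmdual(S),S)$ and appeal to Proposition~\ref{prop:session-correctness}---is the pattern the paper uses for $\cdual\cdot$ in Proposition~\ref{thm:lmdual}, and transplanting it to $\lmdual$ is harder than for $\cdual\cdot$ for exactly the reason you flag: $\lmdual(S)$ contains (bound) negative variables, whereas Definition~\ref{def:duality-relation-mu} and Proposition~\ref{prop:session-correctness} are stated only for the original syntax of $\SType$. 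Your plan therefore commits you to (i) extending unfolding to a simultaneous substitution of $\rec XT$ for $X$ and $\lmdual(\rec XT)$ for $\dual X$ (with $\lmdual$ shown involutive so that this is coherent), (ii) extending $\Fdual\cdot$ and $\sequiv$ accordingly, and (iii) re-proving the bridge Proposition~\ref{prop:session-correctness} for the extended syntax against the generalised $\treeof\cdot$---and step (iii), where the self-inverse collapse you describe is actually discharged, is more than ``mechanical''; it is the real content that the paper's semantic route avoids. One concrete repair is needed: your substitution lemma $\lmdual(T\subst UX) = \lmdual(T)\subst{\lmdual(U)}{\dual X}\subst UX$ fails for open $U$ (take $T=\outp XX$ and $U=\inp XX$: the trailing $\subst UX$ hits the message-position $X$ inside the plugged copy of $\lmdual(U)$, which the left-hand side leaves untouched), so you must add the side condition $X,\dual X\notin\fv(U)$---fortunately satisfied by the only instance you use, $U=\rec XT$. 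With these caveats your argument goes through and yields $\lmdual(S)\perp S$ as a by-product, staying entirely within the paper's syntactic framework, whereas the paper's mechanized route buys freedom from all of this bookkeeping by working on trees.
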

\begin{proof}
This is one of the results that we have mechanized in Agda (Section~\ref{sec:mechanized-results}).
\end{proof}

There is an alternative formulation of
Lindley-Morris duality that works with conventional substitution (Definition~\ref{def:substitution} with the additional clause $\dual{Y}\subst{S}{Z} = \dual{Y}$, i.e., no substitution for negative variables). The idea is that the
(bound) occurrences of $X$ in the dual of $\rec XS$ are occurrences
not of $X$ (which stands for $S$) but of $\dual X$ (which stands for
the dual of $S$). So we first substitute $\dual X$ for $X$ in $S$ and
only then apply the duality function.
% \pt{what if $S$  in $\rec X S$ already contains $\dual X$? For
%   example, it doesn't seem to work with
%   $\rec X {\outp{\dual X}\dual X}$ where $g$ yields $\rec X{\inp{\dual
%     X} X}$, which is equal to $\lmdualp{\rec X {\outp{X}\dual X}}$.} \sg{I think $g(\rec{X}{\outp{\dual{X}}{\dual{X}}}) = \rec{X}{\inp{X}{\dual{X}}}$ and I think this is correct.}

\begin{definition}[Lindley-Morris Duality, Polished]
  \label{def:lmduality-polished}
  \begin{align*}
    \lmdualp(\End) & =\, \End &
    \lmdualp(X) & =\, \dual{X}\\
    \lmdualp(\inp{T}{S}) & =\, \outp{T}{\lmdualp(S)} &
    \lmdualp(\dual X) & =\, X \\
    \lmdualp(\outp{T}{S}) & =\, \inp{T}{\lmdualp(S)} &
    \lmdualp(\rec{X}{S}) & =\, \rec{X}{\lmdualp(S\subst{\dual{X}}{X})}
  \end{align*}
\end{definition}
\begin{example}
\begin{align*}
  \lmdualp{(\rec X{\outp XX})}
  &= \rec X {\lmdualp{((\outp XX)\subst{\dual X}X)}}\\
  &= \rec X {\lmdualp{(\outp{\dual X}{\dual X})}}
   = \rec X {\inp{\dual X}{\lmdualp(\dual X)}}
   = \rec X {\inp{\dual X}X}
\end{align*}
\end{example}

\begin{proposition}
  For any session type $S$, $\lmdual(S)$ is syntactically equal to
  $\lmdualp(S)$.
\end{proposition}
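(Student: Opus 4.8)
The plan is to prove the equality by structural induction on $S$. All clauses of the two definitions coincide verbatim except the one for the recursion binder, so the base cases and the $\inp TS$ and $\outp TS$ cases follow immediately from the induction hypothesis applied to the continuation, and the only real work is in the $\rec XS$ case. Throughout I restrict attention to session types in which no negative variable occurs free — the types to which duality is actually applied — and maintain this as an invariant. This restriction is genuinely needed: for $S=\rec X{\inp\Int{\dual X}}$ one computes $\lmdual(S)=\rec X{\outp\Int{\dual X}}$ but $\lmdualp(S)=\rec X{\outp\Int X}$, because $\lmsubst{\dual X}{X}$ rewrites the residual $\dual X$ back to $X$ whereas $\subst{\dual X}{X}$ is inert on negative variables. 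The saving observation is that, for a body containing no free $\dual X$, the two substitution forms agree.

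To bridge the two $\mu$-clauses, $\lmdual(\rec XS)=\rec X{(\lmdual(S)\lmsubst{\dual X}{X})}$ and $\lmdualp(\rec XS)=\rec X{\lmdualp(S\subst{\dual X}{X})}$, I isolate a commutation lemma: for every $S$ with $\dual X\notin\fv(S)$,
\[
  \lmdualp(S\subst{\dual X}{X}) \;=\; \lmdualp(S)\lmsubst{\dual X}{X}.
\]
Granting this, the $\mu$ case closes in three steps: rewrite $\lmdualp(\rec XS)$ by the commutation lemma to $\rec X{(\lmdualp(S)\lmsubst{\dual X}{X})}$, apply the induction hypothesis $\lmdualp(S)=\lmdual(S)$ on the body $S$, and read off $\lmdual(\rec XS)$. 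Since $S$ is again free of negative variables, both the invariant and the hypothesis $\dual X\notin\fv(S)$ of the commutation lemma are met, so only the induction hypothesis on the immediate subterm $S$ is required.

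It remains to establish the commutation lemma, by induction on the structure of $S$ (noting that variable-for-variable substitution preserves size, so the hypothesis may be applied to the substituted bodies below). Two auxiliary facts carry the argument. First, a one-line induction shows that on any type $T$ with $\dual X\notin\fv(T)$ the two substitutions coincide, $T\subst{\dual X}{X}=T\lmsubst{\dual X}{X}$; this discharges the message-type components in the input and output cases. Second, in the nested binder case $\rec Y{S'}$ — with $Y\neq X$ by the Barendregt convention, and with $S'\subst{\dual Y}{Y}$ still free of $\dual X$ — I apply the induction hypothesis to the smaller type $S'\subst{\dual Y}{Y}$ and then commute the two positive-variable substitutions $\subst{\dual X}{X}$ and $\subst{\dual Y}{Y}$; these commute by the standard substitution lemma, since $X\neq Y$ and the negative replacements $\dual X,\dual Y$ are inert under further substitution.

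I expect the main obstacle to be exactly this reconciliation of the two substitution mechanisms — the two-way swap $\lmsubst{\dual X}{X}$ versus the one-way, negative-inert $\subst{\dual X}{X}$ — together with carrying the ``no free $\dual X$'' invariant through the induction so that the two substitutions behave identically at every step. Once the commutation lemma is in place, the outer induction is routine.
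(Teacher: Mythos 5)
Your proof takes essentially the same route as the paper's: the paper's entire argument is ``by structural induction on $S$, using a lemma that $\lmdualp$ commutes with substitution'', which is precisely your commutation lemma $\lmdualp(S\subst{\dual X}{X}) = \lmdualp(S)\lmsubst{\dual X}{X}$; your outer induction, the auxiliary fact that $\subst{\dual X}{X}$ and $\lmsubst{\dual X}{X}$ agree on types without free $\dual X$, and the size-based induction to handle the nested-binder case are exactly the details the paper leaves implicit. What you add is substantive, though: the side condition. The proposition as printed quantifies over \emph{any} session type, which in this section includes negative variables, and on such types it is false --- besides your counterexample $\rec X{\inp\Int{\dual X}}$, even the paper's own motivating type $\rec{X}{\outp{\dual{X}}{X}}$ separates the two functions: $\lmdual$ yields $\rec{X}{\inp{X}{X}}$ while $\lmdualp$ yields $\rec{X}{\inp{\dual X}{X}}$, and only the former is the correct dual (the tree of $\rec{X}{\outp{\dual{X}}{X}}$ satisfies $s = \outp{t}{s}$ with $t = \inp{t}{t}$, whose dual is the tree of $\rec{X}{\inp{X}{X}}$). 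So your restriction is a needed correction, not pedantry. One wording slip to fix: your invariant should be that $S$ \emph{contains} no negative variables, not merely that none occurs free --- in your own counterexample $\dual X$ occurs bound and the type is closed, yet the duals differ; what your argument actually uses (and correctly maintains, since in the $\mu$-case you apply the outer induction hypothesis to the body $S$ itself rather than to $S\subst{\dual X}{X}$) is that the body under each binder $\rec X{}$ has no free $\dual X$, which follows from the stronger ``no negative variables anywhere'' invariant but not from the one you state.
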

\begin{proof}
By structural induction on $S$, using a lemma that $\lmdualp$ commutes with substitution.
\end{proof}
%We will show in Section~\ref{sec:results} that Definitions~\ref{def:lmduality} and~\ref{def:lmduality-polished} give syntactically identical results.
% Write $f$ for the original Lindley-Morris duality (Definition~\ref{def:lmduality}) and $g$ for the polished Lindley-Morris duality (Definition~\ref{def:lmduality-polished}).

% Assuming we define substitution so that $\dual{X}\subst{\dual{X}}{X} = X$, Sam and Garrett's duality handles $S = \rec{X}{\outp{X}{X}}$ as follows.
%\[
%\begin{array}{rcl}
%\cdual{S} & = & \cdual{(\rec{X}{\outp{X}{X}})} \\
%& = & \rec{X}{(\cdual{(\outp{X}{X})}\subst{\dual{X}}{X})} \\
%& = & \rec{X}{((\inp{X}{\dual{X}})\subst{\dual{X}}{X})} \\
%& = & \rec{X}{\inp{\dual{X}}{X}}
%\end{array}
%\]
%Now we argue that within $\cdual{S}$, $X$ refers to $\cdual{S}$ and therefore $\dual{X}$ refers to $S = \rec{Y}{\outp{Y}{Y}}$, $\alpha$-converting for clarity. Therefore the conclusion is
%\[
%\cdual{S} = \rec{X}{\inp{(\rec{Y}{\outp{Y}{Y}})}{X}}
%\]
%which is consistent with Example~\ref{ex:recmessage}.

If we are constructing the dual of a session type that contains no negative variables, we might want to avoid introducing negative variables when dualising a recursive type $\rec{X}{S}$. We can achieve this by using Definition~\ref{def:lmduality-polished} and, at the end, replacing all
occurrences of $\dual X$ (there are no bound occurrences of $\dual X$)
by the original type $\rec XS$. 

\begin{definition}[Lindley-Morris Duality, Yielding No New Negative Variables]
  \label{def:lm-nonewneg}
  \begin{align*}
    \cdual{\End} & =\, \End &
    \cdual{X} & =\, \dual{X}
    \\
    \cdual{(\inp{T}{S})} & =\, \outp{T}{\cdual{S}} &
    \cdual{\dual X} & =\, X
    \\
    \cdual{(\outp{T}{S})} & =\;\inp{T}{\cdual{S}} &
    \cdual{(\rec{X}{S})} & =\;
      \rec{X}{((\cdual{S\subst{\dual{X}}{X})}\subst{\rec XS}{\dual X})}
  \end{align*}
\end{definition}
\begin{example}
\begin{align*}
  \cdual S = \cdual{(\rec X{\outp XX})}
  &= \rec X {\cdual{((\outp XX)\subst{\dual X}X)}\subst{S}{\dual X}} \\
  &= \rec X {\cdual{(\outp{\dual X}{\dual X})}\subst{S}{\dual X}} \\
  &= \rec X {(\inp{\dual X}{\cdual{\dual X}})\subst{S}{\dual X}} \\
  &= \rec X {(\inp{\dual X}X) \subst{S}{\dual X}} = \rec X {\inp{S}X}
\end{align*}
\end{example}

This version of the Lindley-Morris definition coincides with the Bernardi-Hennessy definition.

\begin{proposition}
\label{prop:mcdual-cdual}
For any session type $S$, $\mcdual{S}$ is
syntactically equal to $\cdual{S}$.
\end{proposition}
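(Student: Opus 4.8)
The plan is to prove the equality by structural induction on $S$, but a naive induction fails because the two functions keep incompatible bookkeeping. The function $\mcduals[\sigma]{\cdot}$ threads a closing substitution $\sigma$ and applies it, all at once, to every message type it meets; whereas $\cdual{\cdot}$ first renames each recursion variable $X$ to its negative $\dual X$ on the way down---so the renamed variable becomes frozen inside message types, which $\cdual{\cdot}$ never descends into---and only substitutes the recursive type back for $\dual X$ on the way out. There is thus no matching induction hypothesis at the top level, and the claim must be generalised to expose $\sigma$.

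\textbf{Generalisation.} Write $\sigma$, a sequence closing for $S$, as a list of substitutions $\subst{(\rec{X_i}{S_i})}{X_i}$, one for each enclosing $\mu$-binder $X_i$ (in the order accumulated by Definition~\ref{def:mcduality}). Let $\widehat S$ denote $S$ with every such $X_i$ replaced by its negative $\dual{X_i}$ (the substitutions $\subst{\dual{X_i}}{X_i}$ used in Definition~\ref{def:lm-nonewneg}), and let $\bar\sigma$ be the matching composition of back-substitutions $\subst{\cdot}{\dual{X_i}}$, where the entry for $X_i$ replaces $\dual{X_i}$ by $\rec{X_i}{S_i}$ with its still-more-outer binders kept in negated form. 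I would then prove, by structural induction on $S$ and for every closing $\sigma$,
\[
  \mcduals[\sigma]{S} \;=\; \bigl(\cdual{\widehat S}\bigr)\,\bar\sigma .
\]
The proposition is the instance $\sigma = \Empty$, where $\widehat S = S$ and $\bar\sigma$ is empty, so the two sides become $\mcdual S$ and $\cdual S$.

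\textbf{Key cases.} The cases $\End$ and $X_i$ are immediate: a tail occurrence of $X_i$ is renamed to $\dual{X_i}$, sent back to $X_i$ by $\cdual{\dual{X_i}} = X_i$, and left untouched by $\bar\sigma$ (which substitutes only for negative variables), matching $\mcduals[\sigma]{X_i} = X_i$. For $\inp T{S'}$ (and dually $\outp T{S'}$) the continuation is discharged by the induction hypothesis, so the content is the message type: I must show $\widehat T\,\bar\sigma = T\sigma$, i.e.\ that negating the binders and then running $\bar\sigma$ closes $T$ exactly as $\sigma$ does. For $\rec X{S'}$ I unfold both sides; on the right, $\cdual{\rec X{\widehat{S'}}} = \rec X{(\cdual{\widehat{S'}\subst{\dual X}{X}})\subst{\rec X{\widehat{S'}}}{\dual X}}$, whose inner renaming $\widehat{S'}\subst{\dual X}{X}$ and whose new head back-substitution are exactly the negated body and the new head of $\bar\sigma$ determined by the extended substitution $\subst{(\rec X{S'})}{X};\sigma$, so the induction hypothesis for $S'$ under this extended substitution closes the case.

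\textbf{Main obstacle.} The hard part is the substitution bookkeeping underlying the message case and the $\mu$ case: establishing $\widehat T\,\bar\sigma = T\sigma$ and verifying that $\bar\sigma$, whose entries mention recursive types with \emph{partially negated} outer binders, composes to the genuinely closing substitution $\sigma$. I expect to isolate this as an auxiliary induction on the length of $\sigma$ (the number of enclosing binders), relying on the Barendregt convention to exclude capture and to let the substitutions for distinct binders commute. Once this lemma is in hand, the three structural cases above are routine.
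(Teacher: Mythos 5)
Your proposal is essentially the paper's own proof: the paper likewise generalises the statement to $\mcduals[\sigma]{S} = \cdual{(S\hat{\sigma})}\,\dual{\sigma}$, where $\hat{\sigma}$ replaces each substituted variable $X_i$ by $\dual{X_i}$ and $\dual{\sigma}$ back-substitutes the types for the negative variables, and then argues by structural induction on $S$ over all closing $\sigma$, recovering the proposition at $\sigma = \Empty$. Your only departure---keeping the still-more-outer binders negated inside the entries of $\bar{\sigma}$, so that the remaining back-substitutions for negatives close each entry---is a more careful rendering of the paper's $\dual{\sigma}$ bookkeeping rather than a different route, and it is exactly the kind of detail the auxiliary substitution lemma must settle in either presentation.
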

\begin{proof}
If $\sigma$ is a sequence of substitutions $\subst{T_1}{X_1}\ldots\subst{T_n}{X_n}$ then let $\dual{\sigma} = \subst{T_1}{\dual{X_1}}\ldots\subst{T_n}{\dual{X_n}}$ and $\hat{\sigma} = \subst{\dual{X_1}}{X_1}\ldots\subst{\dual{X_n}}{X_n}$. Prove by structural induction on $S$ that for any sequence of substitutions $\sigma$ closing for $S$, $\mcdual{S,\sigma} = \cdual{(S\hat{\sigma})}\dual{\sigma}$. The result follows by taking $\sigma = \epsilon$.
\end{proof}

Finally, the Lindley-Morris definition is sound with respect to regular trees.

\begin{proposition}
  \label{thm:lmdual}
  If $S\in\SType$ then $\treeof{\cdual{S}}\asymp\treeof{S}$.
\end{proposition}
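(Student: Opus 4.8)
The plan is to reduce the statement to the soundness of the \emph{original} Lindley--Morris duality function, which is already available. Proposition~\ref{prop:lmdual-sound} gives $\lmdual(S)\asymp S$, and the (unlabelled) proposition that $\lmdual(S)$ and $\lmdualp(S)$ are syntactically equal then yields $\treeof{\lmdualp(S)}\asymp\treeof{S}$ for every $S\in\SType$. Since $\asymp$ is self-inverse, this is the same as $\treeof{\lmdualp(S)}=\coidual{\treeof{S}}$. It therefore suffices to establish the tree equality $\treeof{\cdual{S}}=\treeof{\lmdualp(S)}$: combining it with the previous line gives $\treeof{\cdual{S}}=\coidual{\treeof{S}}$, that is, $\treeof{\cdual{S}}\asymp\treeof{S}$.

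For the remaining tree equality, observe that $\cdual{\cdot}$ and $\lmdualp(\cdot)$ are defined by identical clauses except at recursion, where $\cdual{(\rec{X}{S})}=\rec{X}{(\cdual{S\subst{\dual X}{X}})\subst{\rec{X}{S}}{\dual X}}$ additionally substitutes the original type $\rec{X}{S}$ for the bound negative variable $\dual X$. Writing $E=\lmdualp(S\subst{\dual X}{X})$, so that $\lmdualp(\rec{X}{S})=\rec{X}{E}$, the extended tree semantics interprets each bound occurrence of $\dual X$ in $\rec{X}{E}$ as the subtree $\coidual{\treeof{\rec{X}{E}}}$. By the soundness recalled above, $\treeof{\rec{X}{E}}=\coidual{\treeof{\rec{X}{S}}}$, hence $\coidual{\treeof{\rec{X}{E}}}=\treeof{\rec{X}{S}}$. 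Thus the closed type $\rec{X}{S}$ that $\cdual{\cdot}$ plugs in for $\dual X$ carries exactly the tree that $\dual X$ denotes, so the substitution leaves the regular tree unchanged; applying this at every recursion binder turns $\treeof{\lmdualp(S)}$ into $\treeof{\cdual{S}}$ (note that $\cdual{S}$ is closed, so its tree needs no dualised subtrees).

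To make the last step rigorous I would prove $\treeof{\cdual{S}}=\treeof{\lmdualp(S)}$ either by exhibiting a type bisimulation between the two closed types and invoking Proposition~\ref{prop:type-equi-tree-equality}, or by a structural induction threading a closing substitution in the style of the proof of Proposition~\ref{prop:mcdual-cdual}. The hard part is the bookkeeping at nested recursions: the subterms $\rec{X}{S}$ and $\rec{X}{E}$ appearing at inner binders are open, mentioning outer bound variables, so Proposition~\ref{prop:lmdual-sound} --- stated only for closed types --- does not apply to them directly, and the informal ``iterate at every binder'' phrase hides a genuine fixed-point argument. I expect to resolve this by strengthening the induction hypothesis to open session types carrying an environment that records, for each free negative variable $\dual X$, the closed type substituted for it together with the invariant that its tree is the dual of the denotation assigned to $X$; the coinductive nature of $\treeof{\cdot}$ then discharges the fixed-point reasoning, and the rest of the argument is routine.
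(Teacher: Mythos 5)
Your argument is correct in outline but follows a genuinely different route from the paper. The paper's proof is direct and self-contained: it exhibits the relation $\D = \{ (S,\cdual{S}) \mid S\in\SType \}$ as a session duality in the sense of Definition~\ref{def:duality-relation-mu}, concluding $\cdual{S}\perp S$, and then transfers to trees via Proposition~\ref{prop:session-correctness}. You instead reduce the claim to the already-available soundness of $\lmdual$ (Proposition~\ref{prop:lmdual-sound}, the mechanized result) together with the unlabelled proposition $\lmdual(S)=\lmdualp(S)$, leaving as the only new obligation the tree equality $\treeof{\cdual{S}}=\treeof{\lmdualp(S)}$, which you justify by observing that the substitution $\subst{\rec{X}{S}}{\dual X}$ performed by $\cdual{\cdot}$ plugs in exactly the tree that the extended semantics assigns to $\dual X$. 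This reduction is sound and non-circular (note that the paper's earlier soundness statement for $\dual{\mcl{S}}$ is itself proved \emph{from} this proposition, so it could not have been used, and you correctly avoid it). What each approach buys: yours reuses the Agda-verified result and additionally explains \emph{why} the no-new-negative-variables version agrees with the original Lindley--Morris function on trees, which the paper never states; the paper's coinductive argument, by contrast, stays entirely within negative-variable-free syntax and never needs the extended $\treeof{\cdot}$ semantics for negative variables, which the paper defines only informally and on which your key lemma essentially depends. Your self-identified hard part is real: at nested binders the substituted types $\rec{X}{S}$ are open (they mention outer bound negative variables), so Proposition~\ref{prop:lmdual-sound} does not apply to them directly, and a strengthened statement threading a closing substitution, in the style of the $\hat{\sigma}$/$\dual{\sigma}$ bookkeeping in the proof of Proposition~\ref{prop:mcdual-cdual}, is genuinely required rather than optional. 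Since you flag this and sketch a workable fix, and since the paper's own proof is itself a two-line sketch whose coinductive step likewise suppresses a substitution--unfolding lemma for $\cdual{\cdot}$, your proposal is at a comparable level of rigour and constitutes a valid alternative proof strategy.
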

\begin{proof}
First show that $\mathcal{D} = \{ (S,\cdual{S}) \mid S\in\SType \}$ is a session duality (Definition~\ref{def:duality-relation-mu}). This establishes $\cdual{S} \perp S$. Then use Proposition~\ref{prop:session-correctness}.
\end{proof}

The substitutions in Definition~\ref{def:lm-nonewneg}, or equivalently in the definition of message closure (Definition~\ref{def:mcduality}) increase the size of the type. A simple example shows that this increase can be at least quadratic. If $S = \rec{X}{\inp{X}{\cdots\inp{X}{X}}}$ with $n$ inputs, so that the size of $S$ is $n+2$, then $\mcl{S} = \rec{X}{\inp{S}{\cdots\inp{S}{X}}}$ of size $n(n+2)+2$. In contrast, Definitions~\ref{def:lmduality} and~\ref{def:lmduality-polished} preserve the size of the type because they only substitute variables for variables. In an implementation of a programming language with session types, it is possible to avoid computational issues resulting from these syntactic size increases, by working with a graph representation of regular trees.

\section{Mechanized Results}
\label{sec:mechanized-results}

We mechanized some of the results of the paper in Agda and are working
towards a full mechanized account of all results. For accessibility,
we paraphrase the definitions in standard mathematical notation rather
than Agda syntax. Cognoscenti may explore the Agda source code
corresponding to the development in this section in file
\texttt{Duality.agda} at 
\url{https://github.com/peterthiemann/dual-session}.

The baseline for the mechanization is the coinductive formalization of
session types (Definition~\ref{def:co-sessions}), which we consider as
the ground truth.
In this setting, a session type is a
 potentially infinite tree as contained in the greatest fixpoint $\SType^\infty$ of function $\Sgen$.
 \begin{definition}[Coinductive Session Types]
\label{def:co-sessions}
   \begin{align*}
     \Sgen(\ST) &= \{ \End \}
               \cup \{ \outp{T}S, \inp{T}S \mid S \in \ST, T \in \{\Int\} \cup \ST \}
   \end{align*}
 \end{definition}
 Defining duality for coinductive session types is a straightforward
 corecursively defined function which we call $\coidual\cdot$, reusing the name from Definition~\ref{def:tree-duality} because it implements that function.
 \begin{definition}[Corecursive Duality Function]
   \begin{align*}
     \coidual \End & = \End &
     \coidual {\outp T S} &= \inp T {\coidual S} &
     \coidual {\inp T S} &= \outp T {\coidual S} 
   \end{align*}
 \end{definition}
It is also straightforward to define the duality relation (cf.\
Def.~\ref{def:duality-relation-mu}) as the greatest fixpoint $(\perp)$ of $\Fdual\cdot$.
\begin{definition}[Duality on Coinductive Session Types]
  If $\D$ is a binary relation on tree types, then
  \begin{align*}
    \Fdual\D &= \{ (\End, \End) \} \cup \{ (\outp T S, \inp T
               {S^\perp}) , (\inp T S, \outp T {S^\perp}) \mid (S, S^\perp) \in \D \}
  \end{align*}
\end{definition}
Given these definitions, it is easy to show that the corecursive
duality function is sound and complete with respect to the duality
relation (cf.\ Proposition~\ref{prop:session-correctness}).
\begin{proposition}
%  $S \perp S'$ if and only if $S' = \coidual S$. % No dual function
%  on trees anymore, only relation
  $S \perp S'$ if and only if $S' = \coidual S$. 
\end{proposition}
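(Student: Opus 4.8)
The statement is a biconditional relating the corecursive duality function $\coidual\cdot$ and the coinductive duality relation $\perp$, both defined as greatest fixpoints over $\SType^\infty$. The plan is to prove the two directions by separate coinductive arguments, exploiting the fact that equality of coinductive session types is bisimilarity (pointwise tree equality).

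For the right-to-left direction, I would show that the relation $\D = \{(S, \coidual S) \mid S \in \SType^\infty\}$ is a session duality, i.e.\ that $\D \subseteq \Fdual\D$. Given $(S, \coidual S) \in \D$, I case on the outermost constructor of $S$. When $S = \End$ we have $\coidual S = \End$ and $(\End,\End) \in \Fdual\D$. When $S = \outp T{S_0}$, the defining equation $\coidual{\outp T{S_0}} = \inp T{\coidual{S_0}}$ produces exactly the pair $(\outp T{S_0}, \inp T{\coidual{S_0}})$, which lies in $\Fdual\D$ provided $(S_0, \coidual{S_0}) \in \D$, and this holds by the definition of $\D$. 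The case $S = \inp T{S_0}$ is symmetric. Since $\perp$ is the greatest session duality, $\D \subseteq {\perp}$, hence $S \perp \coidual S$ for every $S$; substituting $S' = \coidual S$ yields $S \perp S'$.

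For the left-to-right direction I would establish uniqueness of the dual by a bisimulation. Define $R = \{(S', \coidual S) \mid S \perp S'\}$ and show that $R$ witnesses equality of coinductive session types. Given $(S', \coidual S) \in R$, unfold $S \perp S'$ using $\perp \subseteq \Fdual\perp$ and case split. If $(S,S') = (\End,\End)$ then $\coidual S = \End = S'$. If $S = \outp T{S_0}$ and $S' = \inp T{S_0^\perp}$ with $S_0 \perp S_0^\perp$, then $\coidual S = \inp T{\coidual{S_0}}$; both $S'$ and $\coidual S$ have head $\inp T{\cdot}$ with the identical message type $T$, and their continuations are related by $R$, since $S_0 \perp S_0^\perp$ places $(S_0^\perp, \coidual{S_0}) \in R$. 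The input case is symmetric. Thus $R$ is a bisimulation and $S' = \coidual S$.

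I expect the main obstacle to be organisational rather than mathematical: correctly formulating the forward direction as a bisimulation for equality of coinductive data and confirming that message types never require a recursive comparison. This last point is what makes both arguments go through cleanly. Because $\Fdual$ and $\coidual\cdot$ both carry the message type $T$ through unchanged, every matched pair shares a literally identical $T$, so no bisimulation obligation arises on messages and the only coinductive work is on the session-type continuations.
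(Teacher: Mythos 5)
Your proposal is correct and matches the intended argument: the paper itself gives no written proof (it declares the result ``easy to show'' and delegates it to the Agda mechanization), and the natural mechanized proof is exactly your two coinductive arguments --- showing $\{(S,\coidual S)\}$ is a session duality for one direction, and a bisimulation-up-to-$\perp$ establishing uniqueness of the dual tree for the other. Your closing observation that the message type $T$ is carried through literally unchanged by both $\Fdual{\cdot}$ and $\coidual{\cdot}$, so no coinductive obligation arises on messages, is precisely the point that makes the coinductive formulation (unlike the syntactic one with $\sequiv$ side conditions) go through cleanly.
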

%\vv{check this statement; no more dual function on trees}

To formalize session types inductively, we insist that $\mu$-types are
in normal form where there are no consecutive $\mu$-abstractions, i.e.,
no subterms of the form $\rec X{\rec Y S}$, and the body of a $\mu$ is
never a variable. Normal forms are contractive by construction and
every contractive session type (according to Definition~\ref{def:types}) can
be converted to its equivalent normal form by repeatedly coalescing
subterms of the form $\rec X{\rec Y S}$
to $\rec X {S\subst X Y}$ and transforming subterms of the form $\rec
X Y$ to $Y$, assuming $X\ne Y$. The Agda formalization enforces normal forms using
two mutually recursive syntactic categories, $S$ and $S'$, for session types:
\begin{align*}
  S &\grmeq S' \grmor \rec X S' \grmor X \grmor \dual{X} &
  S' &\grmeq \End \grmor \outp T S \grmor \inp T S &
  T &\grmeq \Int \grmor S
\end{align*}
For this representation, we state various definitions of duality as
shown in Sections~\ref{sec:duality-bh} and~\ref{sec:duality-lm}.
Next, we define an embedding $\embed\cdot$ from $\SType$ to tree types by
unfolding the recursion. This function corresponds to the $\treeof\cdot$
function (Definition~\ref{def:treeof}).
\begin{align*}
  \embed {\rec X S'} & = \embed{ S'\subst{\rec X S'} X}' &
  \embed{\End}' &= \End &
  \embed{\outp T S}' &= \outp{\embed T}{\embed S} &
  \embed{\inp T S}' &= \inp{\embed T}{\embed S} &
  \embed{\Int} &= \Int
\end{align*}
This definition is mutually recursive ($\embed{\cdot}$ applies to $S$
and $\embed\cdot'$ applies to $S'$) and it is guarded (i.e., it
yields a proper, potentially infinite term) because $\embed{\cdot}'$
always yields a top-level constructor.

We successfully mechanised a range of results from this paper among
them Proposition~\ref{prop:lmdual-sound}, restated here with the embedding
function. 

\begin{proposition}
\label{thm:mech-lmdual}
  For all $S \in \SType$,
  $\coidual{\embed S} = \embed{\lmdual(S)}$.
\end{proposition}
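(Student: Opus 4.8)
On paper the statement is nearly an immediate corollary: by Proposition~\ref{prop:lmdual-sound}, $\embed{\lmdual(S)}$ and $\embed S$ are dual trees, and since the coinductive characterization makes $\coidual{\cdot}$ the unique dual (so that $s\dualrel t$ iff $t=\coidual{s}$, cf.\ Definition~\ref{def:tree-duality}), this rearranges to $\embed{\lmdual(S)}=\coidual{\embed S}$. Since the present proposition is exactly the Agda restatement of Proposition~\ref{prop:lmdual-sound}, however, I would instead give the direct coinductive argument that the mechanization carries out, rather than appeal to Proposition~\ref{prop:lmdual-sound}.

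The objects $\coidual{\embed S}$ and $\embed{\lmdual(S)}$ are the potentially infinite trees produced by unfolding, so the plan is to prove the equality by coinduction (guarded corecursion): exhibit a bisimulation for tree equality containing every pair $(\coidual{\embed S},\embed{\lmdual(S)})$ with $S\in\SType$. For the coinduction to close I would generalize to \emph{open} types equipped with an environment $\rho$ that assigns a closed type to each free variable and records its polarity, following the generalized embedding in which $\embed X$ is the embedding of $\rho(X)$ while $\embed{\dual X}$ is its dual under $\coidual{\cdot}$. The generalized claim relates $\coidual{\embed{S\rho}}$ with $\embed{(\lmdual S)\,\rho^\star}$, where $\rho^\star$ swaps positive and negative variables to match the $\lmsubst{\dual X}{X}$ step in Definition~\ref{def:lmduality}.

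For the bisimulation check, take a pair and use contractivity to unfold $S$ finitely until a top constructor is exposed: $\End$, $\inp{\embed T}{\embed{S'}}$, or $\outp{\embed T}{\embed{S'}}$. The $\End$ case is trivial. In the prefix cases $\coidual{\cdot}$ flips the head and recurses on the right spine, so I must show that $\embed{\lmdual(S)}$ exposes the dual head with a matching message subtree and continuation. This is supplied by a commutation lemma stating how $\lmdual$ (equivalently the polished $\lmdualp$ of Definition~\ref{def:lmduality-polished}, which is syntactically equal and uses ordinary substitution) interacts with the unfolding substitution $\subst{\rec X S}{X}$ and the polarity swap $\subst{\dual X}{X}$. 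The continuation obligation is an instance of the generalized hypothesis; the message obligation is as well, but through the polarity bookkeeping: a message variable $X$ becomes $\dual X$ under the swap, and $\embed{\dual X}$ is the dual subtree, so the required message equality is again an instance of the hypothesis (in the running example $\rec X{\outp XX}$ it is literally the same goal, discharged by guardedness).

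The main obstacle is exactly this commutation lemma together with the polarity-tracking generalization: proving that the embedding's dualization of negative variables lines up with the $\lmsubst{\dual X}{X}$ swap in $\lmdual$, uniformly in message and tail positions. This is the step at which naive duality fails and Lindley--Morris duality succeeds---a positive $X$ in a message stands for $\rec X S$ and must remain undualized there while its tail occurrences are flipped---so getting the positive-versus-negative accounting right under $\embed{\cdot}$, and arranging the generalized statement so that the coinduction is guarded and its message subgoals are covered by the hypothesis, is where the real work lies. Once the lemma is established, the bisimulation check above is routine and yields $\coidual{\embed S}=\embed{\lmdual(S)}$.
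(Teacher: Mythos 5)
Your proposal is correct and matches the paper's approach: the paper gives no written proof of this proposition but defers to the Agda mechanization, which proceeds exactly as you describe---a guarded coinductive proof of tree equality, generalized over open types with polarity-aware variable bindings (so that $\embed{\dual X}$ denotes the $\coidual{\cdot}$-dual of the tree bound to $X$) and discharged via a lemma relating $\lmdual$ to the unfolding and variable-swap substitutions. You also rightly avoid the shortcut through Proposition~\ref{prop:lmdual-sound}, which would be circular since that proposition's own proof \emph{is} this mechanized statement.
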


%%% Local Variables:
%%% mode: latex
%%% TeX-master: "main"
%%% End:

% !TeX root = main.tex

\section{Conclusion}
\label{sec:conclusion}

We surveyed the competing definitions of session type duality in the
presence of recursion. Starting from an interpretation of session types as trees, and a duality relation on trees, we establish soundness of the Bernardi-Hennessy and the Lindley-Morris definitions of duality on syntactic session types.  
%Starting from the coinductive definition of
%session types and duality as a baseline, we establish its equivalence
%with the definitions of Bernardi and Hennessy as well as Lindley and
%Morris. 
We further come up with streamlined versions of these
definitions and justify the original flawed definition of duality (naive duality) when restricted
to tail recursive session types. We have mechanized some results in Agda, and are working on mechanizing the others.

In summary, we have tied up the remaining loose ends in the definition of duality of session types. Many of the issues in prior work are caused by syntax, namely by reliance on $\mu$-types to express recursion. Taking a standard interpretation of recursive types as regular trees, and the corresponding formalization by coinductive definitions in Agda, is effective   in proving the soundness of syntactic definitions.

\noindent\textbf{Acknowledgements.} Simon Gay was partially supported
by the UK EPSRC grant EP/K034413/1 ``From Data Types to Session Types:
A Basis for Concurrency and Distribution'' and by the EU Horizon 2020
MSCA-RISE project 778233 ``BehAPI: Behavioural Application Program
Interfaces''. Vasco T.\ Vasconcelos was supported by FCT through the LASIGE Research
Unit, ref.\ UIDB/00408/2020, and by COST Action CA15123 EUTypes. We thank Sam Lindley and Garrett Morris for discussions.
%

%%% Local Variables:
%%% mode: latex
%%% TeX-master: "main"
%%% End:

\bibliographystyle{eptcs}
\bibliography{main}

\begin{thebibliography}{10}
\providecommand{\bibitemdeclare}[2]{}
\providecommand{\surnamestart}{}
\providecommand{\surnameend}{}
\providecommand{\urlprefix}{Available at }
\providecommand{\url}[1]{\texttt{#1}}
\providecommand{\href}[2]{\texttt{#2}}
\providecommand{\urlalt}[2]{\href{#1}{#2}}
\providecommand{\doi}[1]{doi:\urlalt{http://dx.doi.org/#1}{#1}}
\providecommand{\bibinfo}[2]{#2}

\bibitemdeclare{book}{DBLP:books/daglib/0067558}
\bibitem{DBLP:books/daglib/0067558}
\bibinfo{author}{Hendrik~Pieter \surnamestart Barendregt\surnameend}
  (\bibinfo{year}{1985}): \emph{\bibinfo{title}{The Lambda Calculus --- its
  Syntax and Semantics}}.
\newblock {\sl \bibinfo{series}{Studies in Logic and the Foundations of
  Mathematics}} \bibinfo{volume}{103}, \bibinfo{publisher}{North-Holland}.

\bibitemdeclare{inproceedings}{BernardiDGK14}
\bibitem{BernardiDGK14}
\bibinfo{author}{Giovanni \surnamestart Bernardi\surnameend},
  \bibinfo{author}{Ornela \surnamestart Dardha\surnameend},
  \bibinfo{author}{Simon~J. \surnamestart Gay\surnameend} \&
  \bibinfo{author}{Dimitrios \surnamestart Kouzapas\surnameend}
  (\bibinfo{year}{2014}): \emph{\bibinfo{title}{On Duality Relations for
  Session Types}}.
\newblock In: {\sl \bibinfo{booktitle}{TGC}},
  \doi{10.1007/978-3-662-45917-1\_4}.

\bibitemdeclare{inproceedings}{BernardiH14}
\bibitem{BernardiH14}
\bibinfo{author}{Giovanni \surnamestart Bernardi\surnameend} \&
  \bibinfo{author}{Matthew \surnamestart Hennessy\surnameend}
  (\bibinfo{year}{2014}): \emph{\bibinfo{title}{Using Higher-Order Contracts to
  Model Session Types (Extended Abstract)}}.
\newblock In: {\sl \bibinfo{booktitle}{CONCUR}},
  \doi{10.1007/978-3-662-44584-6\_27}.

\bibitemdeclare{article}{BernardiH16}
\bibitem{BernardiH16}
\bibinfo{author}{Giovanni \surnamestart Bernardi\surnameend} \&
  \bibinfo{author}{Matthew \surnamestart Hennessy\surnameend}
  (\bibinfo{year}{2016}): \emph{\bibinfo{title}{Using higher-order contracts to
  model session types}}.
\newblock {\sl \bibinfo{journal}{Logical Methods in Computer Science}}
  \bibinfo{volume}{12}(\bibinfo{number}{2}), \doi{10.2168/LMCS-12(2:10)2016}.

\bibitemdeclare{inproceedings}{GayH99}
\bibitem{GayH99}
\bibinfo{author}{Simon~J. \surnamestart Gay\surnameend} \&
  \bibinfo{author}{Malcolm \surnamestart Hole\surnameend}
  (\bibinfo{year}{1999}): \emph{\bibinfo{title}{Types and Subtypes for
  Client-Server Interactions}}.
\newblock In: {\sl \bibinfo{booktitle}{ESOP}}, \doi{10.1007/3-540-49099-X\_6}.

\bibitemdeclare{article}{GayH05}
\bibitem{GayH05}
\bibinfo{author}{Simon~J. \surnamestart Gay\surnameend} \&
  \bibinfo{author}{Malcolm \surnamestart Hole\surnameend}
  (\bibinfo{year}{2005}): \emph{\bibinfo{title}{Subtyping for session types in
  the pi calculus}}.
\newblock {\sl \bibinfo{journal}{Acta Informatica}}
  \bibinfo{volume}{42}(\bibinfo{number}{2-3}), \doi{10.1007/s00236-005-0177-z}.

\bibitemdeclare{inproceedings}{DBLP:conf/concur/Honda93}
\bibitem{DBLP:conf/concur/Honda93}
\bibinfo{author}{Kohei \surnamestart Honda\surnameend} (\bibinfo{year}{1993}):
  \emph{\bibinfo{title}{Types for Dyadic Interaction}}.
\newblock In: {\sl \bibinfo{booktitle}{CONCUR}},
  \doi{10.1007/3-540-57208-2\_35}.

\bibitemdeclare{inproceedings}{HondaVK98}
\bibitem{HondaVK98}
\bibinfo{author}{Kohei \surnamestart Honda\surnameend},
  \bibinfo{author}{Vasco~Thudichum \surnamestart Vasconcelos\surnameend} \&
  \bibinfo{author}{Makoto \surnamestart Kubo\surnameend}
  (\bibinfo{year}{1998}): \emph{\bibinfo{title}{Language Primitives and Type
  Discipline for Structured Communication-Based Programming}}.
\newblock In: {\sl \bibinfo{booktitle}{ESOP}}, \doi{10.1007/BFb0053567}.

\bibitemdeclare{inproceedings}{LindleyM16}
\bibitem{LindleyM16}
\bibinfo{author}{Sam \surnamestart Lindley\surnameend} \&
  \bibinfo{author}{J.~Garrett \surnamestart Morris\surnameend}
  (\bibinfo{year}{2016}): \emph{\bibinfo{title}{Talking bananas: structural
  recursion for session types}}.
\newblock In: {\sl \bibinfo{booktitle}{ICFP}}, \doi{10.1145/2951913.2951921}.

\bibitemdeclare{book}{Pierce}
\bibitem{Pierce}
\bibinfo{author}{Benjamin~C. \surnamestart Pierce\surnameend}
  (\bibinfo{year}{2002}): \emph{\bibinfo{title}{Types and Programming
  Languages}}.
\newblock \bibinfo{publisher}{MIT Press}.

\bibitemdeclare{inproceedings}{TakeuchiHK94}
\bibitem{TakeuchiHK94}
\bibinfo{author}{Kaku \surnamestart Takeuchi\surnameend},
  \bibinfo{author}{Kohei \surnamestart Honda\surnameend} \&
  \bibinfo{author}{Makoto \surnamestart Kubo\surnameend}
  (\bibinfo{year}{1994}): \emph{\bibinfo{title}{An Interaction-based Language
  and its Typing System}}.
\newblock In: {\sl \bibinfo{booktitle}{PARLE}},
  \doi{10.1007/3-540-58184-7\_118}.

\bibitemdeclare{article}{Vasconcelos12}
\bibitem{Vasconcelos12}
\bibinfo{author}{Vasco~T. \surnamestart Vasconcelos\surnameend}
  (\bibinfo{year}{2012}): \emph{\bibinfo{title}{Fundamentals of session
  types}}.
\newblock {\sl \bibinfo{journal}{Information and Computation}}
  \bibinfo{volume}{217}, pp. \bibinfo{pages}{52--70},
  \doi{10.1016/j.ic.2012.05.002}.

\end{thebibliography}

\end{document}